\setlist{leftmargin=*}
\newtheorem{theorem}{Theorem}
\newtheorem{property}{Property}
\newtheorem{assumption}{Assumption}
\newtheorem*{ih}{IH}
\theoremstyle{remark}
\newcommand{\ie}{\textit{i.e.,}\xspace}
\newcommand{\eg}{\textit{e.g.,}\xspace}
\newcommand{\etal}{\textit{et al.}\xspace}
\crefname{figure}{Figure}{Figures}
\crefname{table}{Table}{Tables}
\crefname{listing}{Listing}{Listings}
\crefname{assumption}{Assumption}{Assumptions}
\newcommand{\code}[1]{\texttt{{\small #1}}}
\colorlet{punct}{red!60!black}
\definecolor{background}{HTML}{EEEEEE}
\definecolor{delim}{RGB}{20,105,176}
\colorlet{numb}{magenta!60!black}
\lstdefinelanguage{json}{
    basicstyle=\normalfont\ttfamily,
    morecomment=[f][\color{green!60!black}][0]{\#},
    numbers=left,
    numberstyle=\scriptsize,
    stepnumber=1,
    numbersep=8pt,
    showstringspaces=false,
    breaklines=true,
    frame=lines,
    backgroundcolor=\color{background},
    literate=
     *{0}{{{\color{numb}0}}}{1}
      {1}{{{\color{numb}1}}}{1}
      {2}{{{\color{numb}2}}}{1}
      {3}{{{\color{numb}3}}}{1}
      {4}{{{\color{numb}4}}}{1}
      {5}{{{\color{numb}5}}}{1}
      {6}{{{\color{numb}6}}}{1}
      {7}{{{\color{numb}7}}}{1}
      {8}{{{\color{numb}8}}}{1}
      {9}{{{\color{numb}9}}}{1}
      {:}{{{\color{punct}{:}}}}{1}
      {,}{{{\color{punct}{,}}}}{1}
      {\{}{{{\color{delim}{\{}}}}{1}
      {\}}{{{\color{delim}{\}}}}}{1}
      {[}{{{\color{delim}{[}}}}{1}
      {]}{{{\color{delim}{]}}}}{1},
}
\definecolor{keyword_color}{rgb}{0.8431372549019608, 0.0, 0.023529411764705882}
\definecolor{identifier_color}{rgb}{0.8980392156862745, 0.36470588235294116, 0.0}
\definecolor{string_color}{rgb}{0.1607843137254902, 0.1843137254901961, 0.40784313725490196}
\definecolor{number_color}{rgb}{0.027450980392156862, 0.396078431372549, 0.796078431372549}
\definecolor{annotation_color}{rgb}{0.43529411764705883, 0.23529411764705882, 0.7686274509803922}
\definecolor{grey}{rgb}{0.6, 0.6, 0.6}
\newcommand*\idstyle{%
        \expandafter\id@style\the\lst@token\relax
}
\def\id@style#1#2\relax{%
        \ifcat#1\relax\else
                \ifnum`#1=\uccode`#1%
                        \color{number_color}
                \else
                        \color{identifier_color}
                \fi
        \fi
}
\lstdefinelanguage{graphql}{
  identifierstyle=\idstyle,
  delim=[s][\color{string_color}]{"}{"},%
  morecomment=[f][\color{green!60!black}][0]{\#},
  literate=
     *{0}{{{\color{number_color}0}}}{1}
      {1}{{{\color{number_color}1}}}{1}
      {2}{{{\color{number_color}2}}}{1}
      {3}{{{\color{number_color}3}}}{1}
      {4}{{{\color{number_color}4}}}{1}
      {5}{{{\color{number_color}5}}}{1}
      {6}{{{\color{number_color}6}}}{1}
      {7}{{{\color{number_color}7}}}{1}
      {8}{{{\color{number_color}8}}}{1}
      {9}{{{\color{number_color}9}}}{1}
      {query\ }{{{\color{keyword_color}{query }}}}{1}
      {mutation\ }{{{\color{keyword_color}{mutation }}}}{1}
      {schema}{{{\color{keyword_color}{schema }}}}{1}
      {type\ }{{{\color{keyword_color}{type }}}}{1}
      {input\ }{{{\color{keyword_color}{input }}}}{1}
      {!}{{{\color{keyword_color}{!}}}}{1}
      {@deprecated}{{{\color{annotation_color}{@deprecated}}}}{1}
      {...\ on}{{{\color{keyword_color}{\texttt{... on }}}}}{1}
      ,
}
\def\gql{\lstinline[language=graphql, basicstyle=\small\ttfamily]}
\newcommand{\MyParagraph}[1]{\smallskip \textit{#1}\xspace}
\newif\ifextended
\begin{document}

\newcommand{\MyTitle}{}
\renewcommand{\MyTitle}{Query Analysis for GraphQL API Management}
\renewcommand{\MyTitle}{A Principled Approach to GraphQL Query Analysis}
\renewcommand{\MyTitle}{A Principled Approach to GraphQL Query Cost Analysis}
\title{\MyTitle}

\author{Alan Cha}
\affiliation{
  \institution{IBM Research, USA}            
}
\email{alan.cha1@ibm.com}          

\author{Erik Wittern}
\authornote{Most of the work performed while at IBM Research, USA.}          
\affiliation{
  \institution{IBM, Germany}            
}
\email{erik.wittern@ibm.com}          

\author{Guillaume Baudart}
\affiliation{
  \institution{IBM Research, USA}            
}
\email{Guillaume.Baudart@ibm.com}          

\author{James C. Davis}
 \authornote{Most of the work performed while at Virginia Tech.}          
\affiliation{
  \institution{Purdue University, USA}        
}
\email{davisjam@purdue.edu}          

\author{Louis Mandel}
\affiliation{
  \institution{IBM Research, USA}            
}
\email{lmandel@us.ibm.com}          

\author{Jim A. Laredo}
\affiliation{
  \institution{IBM Research, USA}            
}
\email{laredoj@us.ibm.com}          

\begin{abstract}
The landscape of web APIs is evolving to meet new client requirements and to facilitate how providers fulfill them.
A recent web API model is GraphQL, which is both a query language and a runtime.
Using GraphQL, client queries express the data they want to retrieve or mutate, and servers respond with exactly those data or changes.
GraphQL's expressiveness is risky for service providers because clients can succinctly request stupendous amounts of data, and responding to overly complex queries can be costly or disrupt service availability. 
Recent empirical work has shown that many service providers are at risk.
Using traditional API management methods is not sufficient, and practitioners lack principled means of estimating and measuring the cost of the GraphQL queries they~receive. 

In this work, we present a linear-time GraphQL query analysis that can measure the cost of a query without executing it.
Our approach can be applied in a separate API management layer and used with arbitrary GraphQL backends.
In contrast to existing static approaches, our analysis supports common GraphQL conventions that affect query cost, and our analysis is provably correct based on our formal specification of GraphQL semantics.

We demonstrate the potential of our approach using a novel GraphQL query-response corpus for two commercial GraphQL APIs.
Our query analysis consistently obtains upper cost bounds, tight enough relative to the true response sizes to be actionable for service providers.
In contrast, existing static GraphQL query analyses exhibit over-estimates and under-estimates because they fail to support GraphQL conventions.

\end{abstract}

\begin{CCSXML}
  <ccs2012>
     <concept>
         <concept_id>10002978.10003006.10011610</concept_id>
         <concept_desc>Security and privacy~Denial-of-service attacks</concept_desc>
         <concept_significance>500</concept_significance>
     </concept>
     <concept>
         <concept_id>10011007.10011006.10011050.10011017</concept_id>
         <concept_desc>Software and its engineering~Domain specific languages</concept_desc>
         <concept_significance>300</concept_significance>
     </concept>
  </ccs2012>
\end{CCSXML}
  
\ccsdesc[500]{Security and privacy~Denial-of-service attacks}
\ccsdesc[300]{Software and its engineering~Domain specific languages}

\keywords{GraphQL, algorithmic complexity attacks, static analysis}

\maketitle


\section{Introduction}
\label{sec:introduction}

Web APIs are the preferred approach to exchange information on the internet.
Requirements to satisfy new client interactions have led to new web API models such as GraphQL~\cite{GraphQLInvention:2015}, a data query language.
A GraphQL service provides a \textit{schema}, defining the data entities and relationships for which clients can \textit{query}.

GraphQL has seen increasing adoption because it offers three advantages over other web API paradigms.
First, GraphQL reduces network traffic and server processing because users can express their data requirements in a single query~\cite{Brito:2019}.
Second, it simplifies API maintenance and evolution by reducing the number of service endpoints~\cite{2015GraphQLMotivation}.
Third, GraphQL is strongly typed, facilitating tooling including
  data mocking~\cite{GraphQL-Faker}, 
  query checking~\cite{GraphiQL},
  and wrappers for existing APIs~\cite{Wittern:2018}.
These benefits have not been lost on service providers~\cite{GraphQLUsers}, with adopters including 
  GitHub~\cite{GitHubGraphQL:2019} and
  Yelp~\cite{YelpGraphQL:2017}.
However, GraphQL can be perilous for service providers.
A worst-case GraphQL query requires the server to perform an exponential amount of work~\cite{Hartig:2018}, with implications for execution cost, pricing models, and denial-of-service~\cite{Crosby:2003}.
This risk is not hypothetical ---
  a majority of GraphQL schemas expose service providers to the risk of high-cost queries~\cite{Wittern:2019}.
As practitioners know~\cite{StackOverflow:2016,Rinquin:2017,Stoiber:2018},
\textit{the fundamental problem for GraphQL API management is the lack of a cheap, accurate way to estimate the cost of a query}.

Existing dynamic and static cost estimates fall short (\cref{sec:bm}).
Dynamic approaches are accurate but impractically expensive~\cite{Hartig:2018,Andersson2018GQLResultSizeCalculation}, relying on interaction with the backend service and assuming specialized backend functionality~\cite{Andersson2018GQLResultSizeCalculation}.
Current static approaches are inaccurate and do not support GraphQL conventions~\cite{graphql-validation-complexity,graphql-query-complexity,graphql-cost-analysis}.

{
\setlength{\belowcaptionskip}{0em} 
\begin{figure*}
  \centering
  \includegraphics[width=7in]{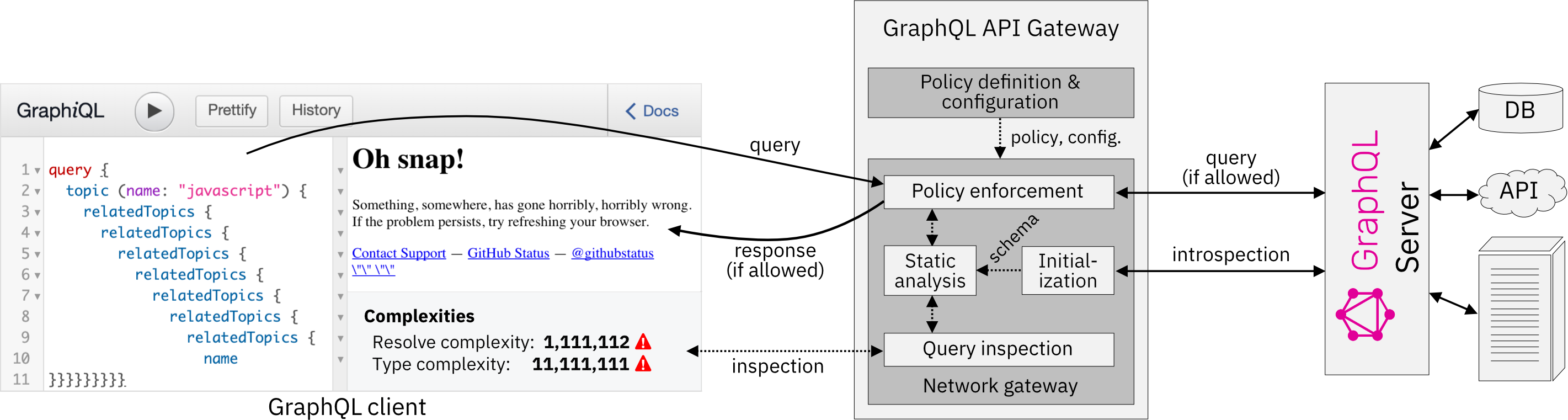}
  \caption{
    Proposed applications of our query analysis. 
    The client's malicious query requests an exponentially large result from GitHub's GraphQL API.
    At the time of our study, GitHub permitted the shown query, but halted its execution after it exceeded a time limit.
    Using our techniques, client-side \textit{query inspection} can provide feedback during composition (see \textit{Complexities} inset).
    Server-side \textit{policy enforcement} can reject queries and update rate limits based on provider-defined policies.
    We disclosed this denial of service vector to GitHub, and it has since been patched (\cref{sec:evaluation_comparison_ClosedSource}).
  }
  \label{fig:architecture}
\end{figure*}
}

We present the first provably correct static query cost analysis for GraphQL.
We begin with a novel formalization of GraphQL queries and semantics (\cref{sec:semantics}).
After extending the formalization with simple \textit{configuration information} to capture common schema conventions,
  we define two \emph{complexity metrics} reflecting server and client costs for a GraphQL query 
  (\cref{sec:background_complexity}).
Then we show how to compute upper bounds for a query's cost according to these metrics (\cref{sec:query_analysis}).
Our analysis takes linear time and space in the size of the query.

Our analysis is accurate and practical (\cref{sec:evaluation}).
We studied 10,000 query-response pairs from two commercial GraphQL APIs.
Unlike existing analyses, our analysis obtains accurate cost bounds even for pathological queries.
With minimal configuration, our bounds are tight enough to be actionable for service providers.
Our analysis is fast enough to be used in existing request-response flows.

This paper makes the following contributions:

\begin{itemize}[topsep={1.5pt},partopsep={1.5pt}]
  \item We give a novel formalization of GraphQL semantics (\cref{sec:semantics}).
  \item We propose two GraphQL query complexity measures (\cref{sec:background_complexity}) that are used to estimate the cost of a query. We then prove a linear-time static analysis to obtain an upper bound for these measures on a given query (\cref{sec:query_analysis}).
  \item We evaluate our analysis against two public GraphQL APIs and show that it is practical, accurate, and fast (\cref{sec:evaluation}).
  We also identify causes of over- and under-estimation in existing static analyses.
  \item We share the first GraphQL query-response corpus~\cite{artifact}: $10,000$ unique query and response pairs from the GitHub and Yelp APIs.
\end{itemize}

We illustrate applications of our analysis by exploiting a flaw in GitHub's static analysis (\cref{fig:architecture}).\footnote{GitHub's API also has a runtime defense, so the risk to their service was minimal.} 
We issued an exponential-time query to GitHub.
GitHub's analysis incorrectly estimated the query's cost, accepted it, and wasted resources until the evaluation timed out. 
Our analysis can help service providers avoid this situation.
Some large queries are accidental, and our measure of \textit{type complexity} would permit clients to understand the potential sizes of responses before issuing queries, reducing accidental service and network costs.
Both type complexity and our measure of \textit{resolve complexity} would permit service providers to understand a query's potential execution cost.
Using these metrics will allow service providers to identify high-cost queries and respond appropriately. 

\section{Background and Motivation}
\label{sec:bm}
In this section, we
  motivate the need for GraphQL query cost analysis (\cref{sec:bm-motivation})
  and then discuss existing query analyses (\cref{sec:bm-existingAnalyses}).

\subsection{Motivation} \label{sec:bm-motivation}

Our work is motivated by two aspects of software engineering practice:
  (1)~the majority of real-world GraphQL schemas expose service providers to high-cost queries,
  and (2)~existing strategies employed by service providers are inadequate.

\paragraph{High-complexity GraphQL schemas are common in practice.} \label{sec:bm-motivation-wittern}

Hartig and P{\'{e}}rez showed that a GraphQL query can yield an exponential amount of data in the size of the query~\cite{Hartig:2018}.
Such a query requests the nested retrieval of the same data, and is only possible
  if the schema defines self-referential relationships (``loops of lists''),
  and
  if the underlying data contains such relationships.
Wittern \etal extended their analysis to identify schemas with polynomial-sized worst-case responses, and analyzed a corpus of GraphQL schemas for these properties~\cite{Wittern:2019}.
In their corpus, they found that
  \textbf{over 80\%} of the commercial or large-scale open-source schemas had exponential worst-case behavior,
  and that
  \textbf{under 40\%} of \textit{all} schemas guaranteed linear-time queries.

\paragraph{Many public GraphQL APIs do not document any query analysis.} 
We manually studied the documentation for the $30$ public APIs listed by \emph{APIs.guru}, a community-maintained listing of GraphQL APIs~\cite{APIsGuru-GraphQLAPIs}.
We used public APIs listed as of February $28^{\text{th}}$, 2020; other GraphQL APIs are unlisted or private~\cite{GraphQLUsers}.
Disturbingly, $25$ APIs ($83\%$) describe neither static nor dynamic query analysis to manage access and prevent misuse.
%
$22$ APIs ($73\%$) make no reference to rate limiting or to preventing malicious or overly complex requests.
Three APIs ($10\%$) perform rate limiting, but only by request \textit{frequency}, ignoring the wide range of query complexities.

A few APIs have incorporated customized query and/or response analysis into their management approach.
Five APIs ($17\%$) describe analyzing GraphQL queries to apply rate limiting based on the estimated or actual cost of a query or response.
GitHub~\cite{GitHubGraphQL:ResourceLimitations:2019}, Shopify~\cite{ShopifyGraphQL:Ratelimits:2019}, and Contentful~\cite{Contentful:Ratelimits:2019} estimate the cost of queries before executing them.
Shopify and Yelp~\cite{Yelp:Ratelimits:2019} update remaining rate limits by analyzing responses, i.e., the actual data sent to clients.
But these approaches have shortcomings that are discussed in~\cref{sec:evaluation_comparison}.

\subsection{Existing GraphQL Query Cost Analyses}
\label{sec:bm-existingAnalyses}
A GraphQL \textit{query cost analysis} measures the cost of a query without fully executing it.
Service providers can use such an analysis to avoid denial of service attacks, as well as for management purposes. 

There are two styles of GraphQL query analysis: \emph{dynamic}~\cite{Hartig:2018} and \emph{static}~\cite{graphql-validation-complexity,graphql-query-complexity,graphql-cost-analysis}.
The dynamic analysis of~\cite{Hartig:2018} considers a query in the context of the data graph on which it will be executed.
Through lightweight query resolution, it steps through a query to determine the number of objects involved.
This cost measure is accurate but expensive to obtain,
  because it incurs additional runtime load and potentially entails engineering costs~\cite{Andersson2018GQLResultSizeCalculation}.\footnote{
    In particular, their analysis repeatedly interacts with the GraphQL backend, and assumes that the backend supports cheap queries for response size.
    This is plausible if the backend is a traditional database, but GraphQL is backend agnostic (\cref{sec:semantics}).
  }


Static analyses~\cite{graphql-validation-complexity,graphql-query-complexity,graphql-cost-analysis} calculate the worst-case query cost supposing a pathological data graph.
Because a static analysis assumes the worst,
  it can efficiently provide an upper bound on a query's cost
  without
  interacting with the backend.
The speed and generality of static query analysis makes this an attractive approach for commercial GraphQL API providers.


Our static approach follows a similar paradigm as existing static analyses but we differ in several ways:
(1)~We provide two distinct definitions of query complexity, which is used to measure query cost;
(2)~Our analysis can be configured to handle common schema conventions to produce better estimates;
(3)~We build our analysis on formal GraphQL semantics and prove the correctness of our query complexity estimates; and
(4)~We perform the first evaluation of such an analysis on real-world APIs.
Overall, our evaluation shows the benefits of a formal and configurable approach, identifying shortcomings in existing static analyses.
\newcommand{\gqmatch}[3]{\ensuremath{\mathit{#1} \, \texttt{:} \mathit{#2} \;\texttt{(}\mathit{#3}\texttt{)}}}
\newcommand{\gqon}[2]{\ensuremath{\textnormal{\texttt{...on}} \; \mathit{#1} \;\texttt{\{} #2 \texttt{\}}}}
\newcommand{\gqconj}[2]{\ensuremath{{#1}\,{#2}}}
\newcommand{\gqnest}[2]{#1 \texttt{\{} #2 \texttt{\}}}

\newcommand{\gsem}[3]{\ensuremath{\llbracket #1 \rrbracket({#2},{\mathit{#3}})}}
\newcommand{\grsize}[4]{\ensuremath{\texttt{qrcx(} #2, \mathit{#1}, {#3},\mathit{#4}\texttt{)}}}
\newcommand{\gtsize}[4]{\ensuremath{\texttt{qtcx(} #2, \mathit{#1}, {#3},\mathit{#4}\texttt{)}}}

\newcommand{\jstypeof}[1]{\ensuremath{\texttt{typeof(} #1 \texttt{)}}}
\newcommand{\jsisarray}[1]{\ensuremath{\texttt{[} #1 \texttt{]}}}
\newcommand{\jsrsize}[2]{\ensuremath{\texttt{rcx(} \mathit{#1}, #2 \texttt{)}}}
\newcommand{\jstsize}[2]{\ensuremath{\texttt{tcx(} \mathit{#1}, #2 \texttt{)}}}
\newcommand{\jsqsize}[1]{\ensuremath{\texttt{size(} #1 \texttt{)}}}
\newcommand{\jslength}[1]{\ensuremath{\texttt{length(} #1 \texttt{)}}}
\newcommand{\jsnull}{\ensuremath{\texttt{null}}}
\newcommand{\jsor}[2]{\ensuremath{#1 \, \texttt{||} \, #2}}
\newcommand{\jsobject}[1]{\ensuremath{\texttt{\{} #1 \texttt{\}}}}
\newcommand{\jsfield}[2]{\ensuremath{\mathit{#1}: #2}}
\newcommand{\jslist}[2]{\ensuremath{\texttt{[}#1 \texttt{,} ... \texttt{,} #2 \texttt{]}}}
\newcommand{\jsmerge}[2]{\ensuremath{\texttt{merge(} #1 \texttt{,} #2 \texttt{)}}}
\newcommand{\jsresolve}[4]{\ensuremath{\texttt{resolve(}\mathit{#1} \texttt{,} \mathit{#2} \texttt{,} \mathit{#3} \texttt{,} \mathit{#4}\texttt{)}}}
\newcommand{\jslimit}[5]{\ensuremath{\texttt{limit(}\mathit{#1} \texttt{,} \mathit{#2} \texttt{,} \mathit{#3} \texttt{,} \mathit{#4} \texttt{,} \mathit{#5}\texttt{)}}}
\newcommand{\jsaccess}[2]{\ensuremath{\mathit{#1}\texttt{.}\mathit{#2}}}
\newcommand{\jsget}[2]{\ensuremath{\mathit{#1}\texttt{[}\mathit{#2}\texttt{]}}}
\newcommand{\jsstring}[1]{\ensuremath{\texttt{"}\mkern-2mu\mathit{#1}\texttt{"}}}

{
\setlength{\belowcaptionskip}{0em} 
\begin{figure*}
  \centering
\begin{minipage}[t]{0.33\textwidth}
\begin{lstgql}
  schema { query: Query }
  
  type Query { topic(name: String): Topic }
  
  type Topic {
    relatedTopics(first: Int): [Topic]
    name: String
    stargazers(after: String, last: Int):
      StargazerConnection }
      
  type StargazerConnection {
    totalCount: Int
    edges: [StargazerEdge]
    nodes: [User] }
    
  type StargazerEdge {
    node: User
    cursor: String }
    
  type User { name: String }
\end{lstgql}
\end{minipage}
\vline
\begin{minipage}[t]{.32\textwidth}
\begin{lstgql}
  query {
    topic(name: "graphql") {
      relatedTopics(first: 2) {
        name
      }
      ...on Starrable {
        stargazers(last: 2, after: "Y3...") {
          totalCount
          edges { # Connections pattern
            node { name }
            cursor
          }
  } } } }
  \end{lstgql}
\end{minipage}
\vline
\begin{minipage}[t]{.25\textwidth}
\begin{lstgql}
  { "data": {
      "topic": {
        "relatedTopics": [
          {"name": "api"},
          {"name": "rest"}
        ],
        "stargazers": {
          "totalCount": 1252,
          "edges": [
            {"node": {"name": "XXX"}, 
             "cursor": "Y3V..."},
            {"node": {"name": "XXX"}, 
             "cursor": "Y3V..."}
          ]
  } } } }
\end{lstgql}
\end{minipage}
  \caption{
    A GraphQL schema (left) with a sample query that uses the connections pattern (center) and response (right).
  }
  \label{fig:qr-example}
  \end{figure*}
}

\section{
  A Novel GraphQL Formalization
}
\label{sec:semantics}

In this section we introduce GraphQL schemas and queries.
Then, we give a novel formalization of the semantics of query execution based on the GraphQL specification~\cite{GraphQLSpec} and reference implementation~\cite{GraphQL-js}.
Compared to Hartig and P\'erez~\cite{Hartig:2018}, our semantics is
   more compact,
   closer to the concrete GraphQL syntax,
   and
   includes the context object for GraphQL-convention-aware static analysis.


\subsection{GraphQL Schemas and Queries}
\label{sec:background_schema}

For a visual introduction to GraphQL queries, see~\cref{fig:qr-example}.
On the left is an excerpt of GitHub's API's schema.
In the center is a sample query requesting
  a \gql{topic} named \gql{"graphql"},
  the \gql{name}s of two \gql{relatedTopics},
  the \gql{totalCount} of stargazers,
  and the \gql{name}s of two stargazers.
The right side of the figure shows the server's response. 

A GraphQL \emph{schema} defines the data \emph{types} that clients can query, as well as possible \emph{operations} on that data.
Types can be scalars (\eg \gql{Int}, \gql{String}), enumerations, object types defined by the provider (\eg \gql{Topic}), or lists (\eg \gql{[Topic]}).
In addition, there are also input types, used to define object types for arguments.
Each field of an object type is characterized by a name (\eg \gql{relatedTopics}) and arguments used to constrain the matching data (\eg \gql{first}).
All schemas define a \gql{Query} operation type, which contains fields that form the top-level entry points for queries~\cite{GraphQLDocs-QueryAndMutationTypes}. Schemas may also define a \gql{Mutation} operation type, which contains fields that allow queries to create, update, or delete data, or a \gql{Subscription} operation type, which provides event-based functionality.

The syntax of a GraphQL \emph{query} is as follows:\footnote{We omit some ``syntactic sugar'' of GraphQL constructs. They can be expressed as combinations of our kernel elements.}
\begin{align}
   q \quad ::= \quad & \gqmatch{label}{field}{args}  \tag{basic query} \\
    \mid \quad & \gqon{type}{q} \tag{filter} \\
    \mid \quad & \gqconj{q}{q} \tag{concatenation} \\
    \mid \quad & \gqnest{\gqmatch{label}{field}{args}}{q} \tag{nesting}
\end{align}

A \textbf{basic query}, \gqmatch{label}{field}{args}, requests a specific \textit{field} of an object with a list of named arguments $\textit{args} = a_1 \texttt{:} v_1 \texttt{,} \dots \texttt{,} a_n \texttt{:} v_n$.
For example \gql{topic(name: "graphql")} in~\cref{fig:qr-example} queries the \gql{topic} field with the argument \gql{name: "graphql"}.
The $\textit{label}$ renames the result of the query with an arbitrary name.
In the GraphQL syntax, $\textit{label}$ can be omitted if it matches the $\textit{field}$, and the list of arguments can be omitted if empty.
A simple $\textit{field}$ is thus a valid query.

Inline fragments $\gqon{type}{q}$ \textbf{filter} a query~$q$ on a type condition, only executing query~$q$ for objects of the correct $\textit{type}$, e.g., \gql{...on Starrable} in \cref{fig:qr-example}.
A query can also \textbf{concatenate} fields~$\gqconj{q_1}{q_2}$,
or request a sub-field of an object via \textbf{nesting} with the form $\gqnest{\gqmatch{label}{field}{args}}{q}$.
Before execution, GraphQL servers validate incoming queries against their schema.

\subsection{Query Execution}
\label{sec:background_execution}

To support a schema, a GraphQL server must implement \textit{resolver functions}.
Each field of each type in the schema corresponds to a resolver in the GraphQL backend. 
The GraphQL runtime invokes the resolvers for each field in the query, and returns a data object that mirrors the shape of the query. 

The evaluation of a query can be thought of as applying successive filters to a \emph{virtual data object} that initially corresponds to the complete data graph.
These filters follow the structure of the query and return only the relevant fields.
For example, when executing the query in \cref{fig:qr-example}, given the \gql{Topic} whose name is \gql{"graphql"}, the resolver for field \gql{relatedTopics} returns a list of \gql{Topic}s, and for each of these \gql{Topic}s the resolver for \gql{name} returns a \gql{String}.

The indirection of resolver functions makes the semantics of GraphQL agnostic to the storage of the data.
The data object is an access point, populated \eg from a database or external service(s) that a resolver contacts.
A resolver must return a value of the appropriate type, but the origin of that value is up to the implementation.

\paragraph{Semantics.}
Formally, \cref{fig:semantics} defines the semantics of the kernel language as an inductive function over the query structure.
The formula $\gsem{q}{o}{ctx} = o'$ means that the evaluation of query~$q$ on data object~$o$ with context~$\textit{ctx}$ returns an object~$o'$.
The context tracks information for use deeper in a nested query.
In our simplified semantics we track the type, field, and arguments of the parent.

{
\begin{figure}
  \centering
  \begin{small}
  $$
  \begin{array}{@{}l@{}}
  \gsem{\gqmatch{label}{field}{args}}{o}{ctx} = \jsobject{\jsfield{label}{\jsresolve{o}{field}{args}{ctx}}}
  \\\\
  \gsem{\gqon{type}{q}}{o}{ctx} =
  \left \{ \begin{array}{ll}
      \gsem{q}{o}{ctx} &\textrm{if } \jstypeof{o} = \textit{type}\\
      \jsobject{} &\textrm{otherwise}
  \end{array}\right.
  \\\\
  \gsem{\gqconj{q_1}{q_2}}{o}{ctx} = \jsmerge{\gsem{q_1}{o}{ctx}}{\gsem{q_2}{o}{ctx}}
  \\\\
  \gsem{\gqnest{\gqmatch{label}{field}{args}}{q}}{o}{ctx} = \\
  \quad
  \left \{ \begin{array}{ll}
    \jsobject{\jsfield{label}{\jslist{\gsem{q}{o_1}{ctx'}}{\gsem{q}{o_n}{ctx'}}}} & \textrm{if } o' = \jslist{o_1}{o_n} \\
    \jsobject{\jsfield{label}{\gsem{q}{o'}{ctx'}}} & \textrm{otherwise}
  \end{array}\right.\\[1.1em]
  \textrm{\ \ \ where } o' = \jsresolve{o}{field}{args}{ctx}\\
  \textrm{\ \ \ and } \textit{ctx}' = \jsobject{
    \jsfield{\texttt{type}}{\jstypeof{o}}\texttt{,}\;
    \jsfield{\texttt{field}}{\textit{field}}\texttt{,}\;
    \jsfield{\texttt{args}}{\textit{args}}
    }
  \end{array}
  $$
  \end{small}
  \vspace{-1em}
\caption{Semantics of GraphQL.}
\label{fig:semantics}
\end{figure}
}

Querying a single field $\gsem{\gqmatch{label}{field}{args}}{o}{ctx}$ calls a resolver function $\jsresolve{o}{field}{args}{ctx}$ which returns the corresponding~$\textit{field}$ in object~$o$. The response object contains a single field $\textit{label}$ populated with this value.
The interpetation the arguments $\textit{args}$ in the resolver is not part of the semantics; it is left to the service developers.

A fragment $\gsem{\gqon{type}{q}}{o}{ctx}$ only evaluates the sub-query~$q$ on objects of the correct $\textit{type}$ (\jstypeof{o} returns the type of its operand).
In the example of \cref{fig:qr-example}, the field \gql{stargazers} is only present in the response if the topic is a \gql{Starrable} type.

Querying multiple fields $\gsem{\gqconj{q_1}{q_2}}{o}{ctx}$ merges the returned objects, collapsing repeated fields in the response into one.\footnote{\jsmerge{o_1}{o_2} recursively merges the fields of~$o_1$ and~$o_2$.}

A nested query $\gsem{\gqnest{\gqmatch{label}{field}{args}}{q}}{o}{ctx}$ is evaluated in two steps.
First, $\jsresolve{o}{field}{args}{ctx}$ returns an object~$o'$.
The second step depends on the type of~$o'$.
If~$o'$ is a list $\jslist{o_1}{o_n}$, the returned object contains a field~$\textit{label}$ whose value is the list obtained by applying the sub-query~$q$ to the all the elements~$o_1, \dots, o_n$ with a new context~$\textit{ctx}'$ containing the type, field, and arguments list of the parent.
Otherwise, the returned object contains a field~$\textit{label}$ whose value is the object returned by applying the sub-query~$q$ on~$o'$ in the new context $\textit{ctx}'$.
By convention the top-level field of the response, which corresponds to the \gql{query} resolver, has an implicit label~\gql{"data"} (see \eg the response in \Cref{fig:qr-example}).


\section{Query Complexity}
\label{sec:background_complexity}

A GraphQL query describes the structure of the response data, and also dictates the resolver functions that must be invoked to satisfy it (which resolvers, in what order, and how many times).
We propose two complexity metrics intended to measure costs from the perspectives of a GraphQL service provider and a client:

\begin{description}
  \item[\textbf{Resolve complexity}] reflects the server's query execution cost.
  \item[\textbf{Type complexity}] reflects the size of the data retrieved by a query.
\end{description}

GraphQL service providers will benefit from either measure, \eg leveraging them to inform load balancing, threat-prevention, resolver resource allocation, or request pricing based on the execution cost or response size.
GraphQL clients will benefit from understanding the type complexity of a query, which may affect their contracts with GraphQL services and network providers, or their caching policies.


Complexity metrics can be computed on either a query or its response.
For a \textbf{query}, in~\cref{sec:analysis-complexity} we propose static analyses to estimate resolve and type complexities \emph{before} its execution given minimal assumptions on the GraphQL server.
For a \textbf{response}, resolve and type complexity are determined similarly but in terms of the fields and data in the response object. 

The intuition behind our analysis is straightforward.
A GraphQL query describes the size and shape of the response.
With an appropriate formalization of GraphQL semantics, an upper bound on resolve complexity and type complexity can be calculated using weighted recursive sums.
But unless it accounts for common GraphQL design practices, the resulting bound may mis-estimate complexities.
In~\cref{sec:evaluation_comparison} we show this problem in existing approaches.

In the remainder of this section, we describe two commonly-used GraphQL pagination mechanisms.
If a GraphQL schema and query uses these mechanisms, either explicitly (\cref{sec:pagination}) or implicitly (\cref{sec:analysis_configuration}), we can obtain a tighter and thus more useful complexity bound.
Research reported that both of these conventions are widely used in real-world GraphQL schemas~\cite{Wittern:2019}, so supporting them is also important for practical purposes.

\subsection{GraphQL Pagination Conventions}
\label{sec:pagination}

At the scale of commercial GraphQL APIs, queries for fields that return lists of objects may have high complexity --- \eg consider the (very large) cross product of all GitHub repositories and users. 
The official GraphQL documentation recommends that schema developers bound response sizes through pagination, using \textit{slicing} or the \textit{connections pattern}~\cite{GraphQLDocs-Pagination}.
GraphQL does not specify semantics for such arguments, so we describe the common convention followed by commercial~\cite{GitHubGraphQL:2019,ShopifyGraphQL:2019,YelpGraphQL:2017} and open-source~\cite{Wittern:2019} GraphQL APIs.

Resolvers can return lists of objects which can result in arbitrarily large responses -- bounded only by the size of the underlying data.
\textbf{Slicing} is a solution that uses \emph{limit arguments} to bound the size of the returned lists (\eg \gql!relatedTopics(first: 2)! in \cref{fig:qr-example}).

The \textbf{connections pattern} introduces a layer of indirection for more flexible pagination, using virtual \gql{Edge} and \gql{Connection} types~\cite{GraphQLDocs-Pagination, Relay-Pagination}.
For example, in~\cref{fig:qr-example}-left the field \gql{stargazers} returns a single \gql{StargazerConnection}, allowing access to the \gql{totalCount} of stargazers and the \gql{edges} field, returning a list of \gql{StargazerEdge}s.
This pattern requires limit arguments to target children of a returned object (\eg \gql!stargazers(last: 2)! in~\cref{fig:qr-example}-middle applies to the field \gql{edges}).

The size of a list returned by a resolver can thus depend on the current arguments and the arguments of the parent stored in the context. 
Ensuring that limit arguments actually bound the size of the returned list is the responsibility of the server developers:

\begin{assumption}\label{hyp:resolve-limit}
  If the arguments list~($\mathit{args}$) or the context~($\mathit{ctx}$) contains a limit argument ($\mathit{arg} \texttt{:} \mathit{val}$) the list returned by the resolver cannot be longer than the value of the argument~($\mathit{val}$), that is:
  \begin{center}
  \begin{small}
  $\jslength{\jsresolve{o}{field}{args}{ctx}} \leq \mathit{val}.$
  \end{small}
\end{center}
\end{assumption}

\noindent
If this assumption fails, it likely implies a backend error.


\paragraph{Pagination, not panacea.}
While slicing and the connections pattern help to constrain the response size of a query and the number of resolver functions that its execution invokes, these patterns cannot prevent clients from formulating complex queries that may exceed user rate limits or overload backend systems.
Our pagination-aware complexity analyses can statically identify such queries.

\subsection{Configuration for Pagination Conventions}\label{sec:analysis_configuration}

As we discuss in~\cref{sec:evaluation}, ignoring slicing arguments or mis-handling the connections pattern can lead to under- or over-estimation of a query's cost.
Understanding pagination semantics is thus essential for accurate static analysis of query complexity.
Since GraphQL pagination is a convention rather than a specification, we therefore propose to complement GraphQL schemas with a configuration that captures common pagination semantics.
To unify this configuration with our definitions of resolve and type complexity, we also include weights representing resolver and type costs.
%
%
Here is a sample configuration for the schema from~\cref{fig:qr-example}:

\noindent
\begin{center}
\begin{minipage}[t]{.24\textwidth}
\begin{lstgql}
resolvers:
  "Topic.relatedTopics":
    limitArguments: [first]
    defaultLimit: 10
    resolverWeight: 1
  "Topic.stargazers":
    limitArguments: [first, last]
    limitedFields: [edges, nodes]
    defaultLimit: 10
    resolverWeight: 1
\end{lstgql}
\end{minipage}
\vline
\hspace{.02\textwidth}
\begin{minipage}[t]{.18\textwidth}
\begin{lstgql}
types:
  Topic:
    typeWeight: 1
  Stargazer:
    typeWeight: 1

\end{lstgql}
\end{minipage}
\end{center}

\medskip

This configuration specifies pagination behavior for slicing and the connections pattern.
In this configuration, resolvers are identified by a string \jsstring{type.field} (\eg \gql{"Topic.relatedTopics"}).
Their limit arguments are defined with the field \gql{limitArguments}.
For slicing, the limit argument applies directly to the returned list (see \gql{"Topic.relatedTopics"}).
For the connections pattern, the limit argument(s) apply to children of the returned object (see \gql{limitedFields} for \gql{"Topic.stargazers"}).
The \gql{defaultLimit} field indicates the size of the returned list if the resolver is called without limit arguments.
We must make a second assumption (using JavaScript dot and bracket notation to access the fields of an object):

\begin{assumption}\label{hyp:resolve-default}
  If a resolver is called without limit arguments, the returned list is no longer than the configuration $\textrm{c}$'s default limit.
  \begin{center}
  \begin{small}
  $
  \jslength{\jsresolve{o}{field}{args}{ctx}} \leq 
  \jsaccess{\jsget{\jsaccess{c}{\texttt{resolvers}}}{\jsstring{type.field}}}{\texttt{defaultLimit}}
  $
  \end{small}
\end{center}
\end{assumption}

\noindent
In the following, $\jslimit{c}{type}{field}{args}{ctx}$ returns the maximum value of
  the limit arguments for the resolver~$\jsstring{type.field}$ if such arguments are present in the arguments list~$\textit{args}$ or the context~$\textit{ctx}$,
  and the default limit otherwise.
If a resolver returns unbounded lists, the default limit can be set to~$\infty$, but we urge service providers to always bound lists for their own protection.

From Assumptions~\ref{hyp:resolve-limit} and~\ref{hyp:resolve-default}, we have:

\begin{property}\label{prop:resolve-limit}
  Given a configuration~$\textit{c}$ and a data object~$o$ of type~$t$, if the context~$\textit{ctx}$ contains the information on the parent of~$o$, we have:
  \begin{center}
  \begin{small}
    $
    \jslength{\jsresolve{o}{field}{args}{ctx}} \leq \jslimit{c}{t}{field}{args}{ctx}.
    $
  \end{small}
  \end{center}

\end{property}

\MyParagraph{Concise configuration.}
Researchers have reported that many \\ GraphQL schemas follow consistent naming conventions~\cite{Wittern:2019}, so we believe that regular expressions and wildcards are a natural way to make a configuration more concise.
For example, the expression \gql{"/.*Edge$/.nodes"} can be used for configuring resolvers associated with \gql{nodes} fields within all types whose names end in \gql{Edge}.
Or, the expression \gql{"User.*"} can be used for configuring resolvers for all fields within type~\gql{User}.

\section{GraphQL Query Cost Analysis}
\label{sec:query_analysis}

In this section we formalize two query analyses to estimate the type and resolve complexities of a query.
The analyses are defined as inductive functions over the structure of the query, mirroring the formalization of the semantics presented in \cref{sec:semantics}.
We highlight applications of these complexities in~\cref{sec:evaluation_comparison}.


Like other static query cost analyses (\cref{sec:bm-existingAnalyses}),
  our analysis returns upper bounds for the actual response costs.
For example, if a query asks for 10 \gql{Topic}s, our analysis will return 10 as a tight upper bound on the response size.
If there are only 3 \gql{Topic}s in the data graph, our analysis will over-estimate the actual query cost.

\subsection{Resolve and Type Complexity Analyses}
\label{sec:static_analysis}

As mentioned in \cref{sec:background_complexity}, we propose two complexity metrics: resolve complexity and type complexity.
In this section, we formalize the resolve complexity analysis and then explain how to adapt the approach to compute the type complexity.
We will work several complexity examples, relying on the \textit{$W_{1,0}$ configuration}:
a resolver weight of 1 for fields returning object and list of object and 0 for all other fields and the top-level operation resolvers (i.e. \gql{query}, \gql{mutation}, \gql{subscription}), and a type weight of 1 for all objects (including those returned in lists) and 0 for all other types and the top-level operation types (i.e. \gql{Query}, \gql{Mutation}, \gql{Subscription}).

\MyParagraph{Resolve Complexity.}
Resolve complexity is a measure of the execution costs of a query.
Given a configuration~$\textit{c}$, the resolve complexity of a response~$r$:~$\jsrsize{r}{c}$ is the sum of the weights of the resolvers that are called to compute the response.
Each resolver call populates a field in the response.
The resolve complexity of a response is thus the sum of the weights of each field.
With the $W_{1,0}$ configuration, the resolve complexity of the response in \cref{fig:qr-example}-right is~$6$: 1~\gql{topic}, 1~\gql{relatedTopics}, 1~\gql{stargazers}, 1~\gql{edges}, 2~\gql{nodes}.

The query analysis presented in \cref{fig:complexity} computes an estimate \code{qrcx} of the resolve complexity of the response.
Each call to a resolver is abstracted by the corresponding weight, and the sizes of lists are bounded using the \texttt{limit} function.
\texttt{limit} uses the limit argument if present; otherwise, it will use the default limit in the configuration.

The analysis is defined by induction over the structure of the query and mirrors the semantics presented in \cref{fig:semantics} with one major difference: the complexity analysis operates solely on the types defined in the schema starting from the top-level \gql{Query} type.
The formula $\grsize{c}{q}{t}{ctx} = x$ means that given a configuration~$c$, the estimated complexity of a query~$q$ on a type~$t$ with a context~$\textit{ctx}$ is $x \in \mathbb{N} \cup \{ \infty \}$.
For example, using the $W_{1,0}$ configuration, the resolve complexity \emph{of the query} in \cref{fig:qr-example}-middle is also~$6$.

\begin{figure}
  \begin{small}
  $$
  \begin{array}{@{}l}
  \grsize{c}{\gqmatch{label}{field}{args}}{t}{ctx} =
  \\
  \quad
  \jsaccess{\jsget{\jsaccess{c}{\texttt{resolvers}}}{\jsstring{t.field}}}{\texttt{resolverWeight}}
  
  \\\\
  \grsize{c}{\gqon{type}{q}}{t}{ctx} =
  \left \{ \begin{array}{ll}
      \grsize{c}{q}{t}{ctx} &\textrm{if } t = \textit{type}\\
      0  &\textrm{otherwise}
  \end{array}\right.
  \\\\
  \grsize{c}{\gqconj{q_1}{q_2}}{t}{ctx} = \grsize{c}{q_1}{t}{ctx} + \grsize{c}{q_2}{t}{ctx}
  \\\\
  \grsize{c}{\gqnest{\gqmatch{label}{field}{args}}{q}}{t}{ctx} =\\
  \quad
  \left \{ \begin{array}{ll}
    w + l \times \grsize{c}{q}{t'}{ctx'} & \textrm{if } \jsget{t}{\textit{field}} = \jsisarray{t'}\\
    w + \grsize{c}{q}{\jsaccess{t}{\textit{field}}}{ctx'} & \textrm{otherwise}
  \end{array}\right.\\[1.1em]
  \textrm{where } w = \jsaccess{\jsget{\jsaccess{c}{\texttt{resolvers}}}{\jsstring{t.field}}}{\texttt{resolverWeight}}\\
  \textrm{and } l = \jslimit{c}{t}{field}{args}{ctx}\\
  \textrm{and } \textit{ctx'} = \jsobject{
    \jsfield{\texttt{type}}{t}\texttt{,}\;
    \jsfield{\texttt{field}}{\textit{field}}\texttt{,}\;
    \jsfield{\texttt{args}}{\textit{args}}
    }
  \end{array}
  $$
  \end{small}
  \vspace{-1em}
\caption{Resolve complexity analysis.
The analysis operates on the types defined in the schema starting from \gql{Query}.}
\label{fig:complexity}
\end{figure}

\begin{restatable}{theorem}{saresolve}
\label{thm:resolve-complexity}
  Given a configuration~$\textit{c}$,
  the analysis of \cref{fig:complexity} always returns an upper-bound of the resolve complexity of the response.
  For any query~$q$ over a data object~$o$, and using \texttt{\{\}} to denote the empty initial context:
  \begin{center}
  \begin{small}
  $
  \grsize{c}{q}{\texttt{Query}}{\jsobject{}} \geq \jsrsize{\gsem{q}{o}{\jsobject{}}}{c}
  $
  \end{small}
  \end{center}
\end{restatable}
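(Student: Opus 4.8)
The plan is to strengthen the statement so that it can be proved by a single structural induction on the query $q$, and then recover the theorem as a special case. The difficulty in setting up the induction is that the analysis of \cref{fig:complexity} is indexed by a schema \emph{type} $t$, whereas the semantics of \cref{fig:semantics} is indexed by a concrete \emph{data object} $o$; these must be tied together. I would therefore carry the invariant that $o$ has type $t$, i.e. $\jstypeof{o} = t$, relying on the standard GraphQL typing discipline (a valid query is executed against a well-typed data graph, and each resolver returns a value of the type the schema declares for its field). Under this invariant the generalized claim reads
\[
\grsize{c}{q}{t}{ctx} \;\geq\; \jsrsize{\gsem{q}{o}{ctx}}{c}
\quad \text{for all } q,\ ctx, \text{ and all } o \text{ of type } t,
\]
and the theorem is the instance $t = \texttt{Query}$, $ctx = \jsobject{}$ with $o$ the root data object.

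I would then induct over the four query forms, in each case computing $\jsrsize{\cdot}{c}$ of the response produced by \cref{fig:semantics} and comparing it to the value of \cref{fig:complexity}. The basic-query case is an equality: both sides collapse to the single resolver weight for $\jsstring{t.field}$, using $\jstypeof{o} = t$ to match the resolver the semantics actually calls with the one the analysis charges. The filter case splits on $t = type$; since $\jstypeof{o} = t$, the semantics and the analysis take the same branch, so the case is either the induction hypothesis or the trivial $0 \geq 0$ (the empty response $\jsobject{}$ has complexity $0$). The nesting case is where \cref{prop:resolve-limit} is used: the response charges the nested resolver once (weight $w$, as in the analysis) and then, in the list subcase, sums the sub-query complexity over the $n = \jslength{\jsresolve{o}{field}{args}{ctx}}$ returned elements, each of type $t'$; by \cref{prop:resolve-limit}, $n \leq l = \jslimit{c}{t}{field}{args}{ctx}$, and since the induction hypothesis bounds each summand by $\grsize{c}{q}{t'}{ctx'}$ and the weights are non-negative, the sum is at most $l \times \grsize{c}{q}{t'}{ctx'}$, exactly the analysis. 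The non-list subcase is just the induction hypothesis at type $\jsaccess{t}{field}$, the declared type of the resolved object.

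I expect the main obstacle to be the concatenation case $\gqconj{q_1}{q_2}$, because the analysis simply adds $\grsize{c}{q_1}{t}{ctx}$ and $\grsize{c}{q_2}{t}{ctx}$, whereas the semantics \emph{merges} the two responses, collapsing any field that both $q_1$ and $q_2$ request. To close this case I would first establish a sub-additivity lemma for merge, $\jsrsize{\jsmerge{r_1}{r_2}}{c} \leq \jsrsize{r_1}{c} + \jsrsize{r_2}{c}$, proved by structural induction on $r_1$ and $r_2$: a field occurring in only one operand contributes equally to both sides, while a field occurring in both contributes once on the left (its sub-values merged recursively, handled by the inner induction) and twice on the right, so the inequality holds provided all configured weights are non-negative. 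Combining this lemma with the two induction hypotheses then yields the concatenation case, and it makes precise why the analysis is a genuine upper bound rather than an equality: it ignores exactly the savings that merging produces. Non-negativity of the weights is the one global hypothesis the whole argument rests on, used both here and in the scaling step of the nesting case.
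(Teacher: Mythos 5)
Your proposal matches the paper's own proof in all essentials: the paper likewise strengthens the statement to an induction hypothesis carrying the invariant $s = \jstypeof{o}$ (with the context holding the parent's information), inducts over the query --- on its size rather than its structure, which is equivalent here --- and closes the filter, concatenation, and nesting cases exactly as you do, via Property~\ref{prop:resolve-limit} and sub-additivity of \texttt{merge}. The only difference is one of detail: the paper asserts $\jsrsize{\jsmerge{o_1}{o_2}}{c} \leq \jsrsize{o_1}{c} + \jsrsize{o_2}{c}$ without proof, whereas you propose to prove it as a separate lemma and explicitly isolate non-negativity of the configured weights as the hypothesis on which both this lemma and the scaling step of the nesting case rest.
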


\begin{proof}
  The theorem is proved by induction on the structure of the query.
  The over-approximation has two causes.
  First, \texttt{limit} returns an upper bound on the size of the returned list (Property~\ref{prop:resolve-limit}).
  If the underlying data is sparse, the list will not be full; the maximum complexity will not be reached.
  Second, we express the complexity of $\jsmerge{o_1}{o_2}$ as the sum of the complexity of the two objects.
  This is accurate if~$o_1$ and~$o_2$ share no properties, but if properties overlap then the merge will remove redundant fields (cf. \cref{sec:background_execution}).
\ifextended
  The details of the proof are given in Appendix~\ref{sec:proof}
\fi
\end{proof}

\MyParagraph{Type Complexity.}
Type complexity is a measure of the size of the response object.
Given a configuration~$c$, the type complexity of a response object~$r$: $\jstsize{r}{c}$ is the sum of the weights of the types of all objects in the response.
Using the $W_{1,0}$ configuration, the type complexity of the response in \cref{fig:qr-example}-right is~$8$: 1~\gql{Topic}, 2~(related) \gql{Topic}s, 1~\gql{StargazerConnection}, 2~\gql{StargazerEdge}s, 2~\gql{User}s.

Similar to our resolve complexity analysis \code{qrcx} (\cref{fig:complexity}),
our type complexity analysis bounds the response's type complexity without performing query execution.
We call the estimated query type complexity \code{qtcx}.
To compute \code{qtcx}, we tweak the first and final rules from the \code{qrcx} analysis:

\begin{enumerate}
\item The call to a resolver is abstracted by the weight of the returned type~$t' = \jsget{t}{\textit{field}}$.
  $$
  \begin{small}
  \begin{array}{@{}l}
  \gtsize{c}{\gqmatch{label}{field}{args}}{t}{ctx} =
  \jsaccess{\jsget{\jsaccess{c}{\texttt{types}}}{t'}}{\texttt{typeWeight}}. 
  \end{array}
  \end{small}
  $$

\item When a nested query returns a list ($\jsget{t}{\textit{field}} = [t']$), the type complexity must reflect the cost of instantiating every element. Every element thus adds the weight of the returned type~$t'$ to the complexity.
$$
\begin{small}
\begin{array}{@{}l}
\gtsize{c}{\gqnest{\gqmatch{label}{field}{args}}{q}}{t}{ctx} =
l \times (w + \gtsize{c}{q}{t'}{ctx'})\\[0.2em]
\textrm{where } w = \jsaccess{\jsget{\jsaccess{c}{\texttt{types}}}{t'}}{\texttt{typeWeight}}.
\end{array}
\end{small}
$$
\end{enumerate}

\noindent
With the $W_{1,0}$ configuration, the type complexity \emph{of the query} in \cref{fig:qr-example} is also~$8$.

\begin{theorem}\label{thm:type-complexity}
  Given a configuration~$\textit{c}$,
  the type complexity analysis always returns an upper-bound of the type complexity of the response.
  For any query~$q$ over a data object~$o$, and using \texttt{\{\}} to denote the empty initial context:
  \begin{center}
  \begin{small}
  $
  \gtsize{c}{q}{\texttt{Query}}{\jsobject{}} \geq \jstsize{\gsem{q}{o}{\jsobject{}}}{c}
  $
  \end{small}
  \end{center}
\end{theorem}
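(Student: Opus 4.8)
The plan is to follow the template of the proof of \cref{thm:resolve-complexity} almost verbatim, since \code{qtcx} differs from \code{qrcx} only in the two rules that assign weights while the recursive skeleton (filter, concatenation, nesting) is identical. First I would strengthen the statement into an induction hypothesis that ranges over arbitrary schema types, matching data objects, and contexts: for every query~$q$, every type~$t$, every object~$o$ of type~$t$, and every context~$\textit{ctx}$ carrying valid parent information, $\gtsize{c}{q}{t}{ctx} \geq \jstsize{\gsem{q}{o}{ctx}}{c}$. The theorem is then the instance $t = \texttt{Query}$, $\textit{ctx} = \jsobject{}$. I would also make explicit the one rule left implicit in the text, the non-list branch of nesting, which is inherited from \code{qrcx} with the resolver weight replaced by the \texttt{typeWeight} of the singleton field type $\jsaccess{t}{\textit{field}}$, so that $\gtsize{c}{\gqnest{\gqmatch{label}{field}{args}}{q}}{t}{ctx} = w + \gtsize{c}{q}{\jsaccess{t}{\textit{field}}}{ctx'}$ there.

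The induction proceeds on the structure of~$q$. For a basic query the response is a single wrapped value whose sole contributed object has type $t' = \jsget{t}{\textit{field}}$, so the analysis returns exactly its \texttt{typeWeight} and the bound holds with equality. The filter case splits on the type condition: when $\jstypeof{o} = \textit{type}$ the analysis recurses and the claim is the induction hypothesis, and otherwise both sides collapse to the empty object and to zero. The concatenation case reuses the merge inequality $\jstsize{\jsmerge{r_1}{r_2}}{c} \leq \jstsize{r_1}{c} + \jstsize{r_2}{c}$ --- merging can only coalesce objects that share a label, never create new ones --- so that the analysis sum $\gtsize{c}{q_1}{t}{ctx} + \gtsize{c}{q_2}{t}{ctx}$ dominates the merged response via the induction hypothesis applied to each operand. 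This is the first of the two sources of over-approximation, exactly as in \cref{thm:resolve-complexity}.

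The crux is the nesting case, specifically its list branch. Here the semantics resolves $o' = \jsresolve{o}{field}{args}{ctx}$ to a list $\jslist{o_1}{o_m}$, each $o_i$ of type $t'$, yielding a response of type complexity $m \cdot w + \sum_{i=1}^{m} \jstsize{\gsem{q}{o_i}{ctx'}}{c}$, whereas the analysis returns $l \times (w + \gtsize{c}{q}{t'}{ctx'})$ with $w$ the \texttt{typeWeight} of $t'$. I would bound each summand by $\gtsize{c}{q}{t'}{ctx'}$ using the induction hypothesis, collapse the sum to $m \cdot (w + \gtsize{c}{q}{t'}{ctx'})$, and finally replace $m$ by $l$ using $m = \jslength{o'} \leq l = \jslimit{c}{t}{field}{args}{ctx}$ from \cref{prop:resolve-limit}; this last monotone step is valid because $w$ and the recursive complexity are non-negative. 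This is the second source of over-approximation. The non-list branch is the same argument specialized to $m = 1$ with no limit slack.

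I expect the genuine obstacle to be bookkeeping rather than deep mathematics: the delicate part is matching the informal ``sum of the \texttt{typeWeight}s of all objects in the response'' definition of $\jstsize{\cdot}{c}$ to the per-element contributions produced by the recursion, ensuring that the freshly instantiated object at each level is charged its weight~$w$ exactly once per list element while its descendants are charged by the recursive call on~$q$. The filter case additionally relies on the static type~$t$ tracked by the analysis agreeing with $\jstypeof{o}$ (the analysis compares~$t$ against the condition, the semantics compares $\jstypeof{o}$); this is the one place where I would state the type-agreement assumption precisely, in particular for interface or union conditions such as \code{...on Starrable}. Everything else transfers mechanically from the resolve-complexity argument.
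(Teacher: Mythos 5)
Your proposal is correct and takes essentially the same route as the paper: the paper's own proof of this theorem is a one-line reduction to the argument for \cref{thm:resolve-complexity}, whose full proof is exactly the induction you describe (a strengthened hypothesis over arbitrary types, objects, and parent-carrying contexts; the merge inequality for concatenation; \cref{prop:resolve-limit} to replace the actual list length by the limit). Your explicit reworking of the nesting case --- charging the type weight $w$ once per list element so the bound becomes $l \times (w + \gtsize{c}{q}{t'}{ctx'})$ rather than $w + l \times \grsize{c}{q}{t'}{ctx'}$ --- is precisely the only adaptation the paper's ``similar'' glosses over.
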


The proof is similar to the proof of Theorem~\ref{thm:resolve-complexity}.

\subsection{Time/Space Complexity of the Analyses} \label{sec:analysis-complexity}
The type and resolve complexity analyses are computed in one traversal of the query.
The time complexity of both analyses is thus~$O(n)$, where~$n$ is the size of the query as measured by the number of derivations required to generate it from the grammar of GraphQL.
Both analyses need to track only the parent of each sub-query during the traversal. This implies that the space required to execute the analyses depends on the maximum nesting of the query which is at worst~$n$.
The space complexity is thus in~$O(n)$.

We emphasize that these are static analyses -- they do not need to communicate with backends.

\subsection{Mutations and Subscriptions}
So far we have considered only GraphQL queries.
GraphQL also supports \emph{mutations} to modify the exposed data and \emph{subscriptions} for event-based functionality~\cite{GraphQLSpec}.
Our resolve complexity approach also applies to mutations, reflecting the execution cost of the mutation.
API providers can use resolve weights to reflect costly mutations in resolve complexity calculations.
However, our type complexity approach only estimates the size of the returned object, ignoring the amount of data modified along the way.
Assuming the user provides the new data, computing the type complexity of arguments passed to mutation resolvers may give a reasonable approximation.
We leave this for future work.
Our analysis can also produce resolve and type complexities for subscription queries. Policies around subscriptions may differ, though. For rate-limiting, for example, the API provider could reduce the remaining rates based on complexities when a subscription is started, and replenish them once the client unsubscribes.



\section{Evaluation}
\label{sec:evaluation}
We have presented our query analysis and proved its accuracy.
In our evaluation we consider five questions:

\begin{enumerate}[leftmargin=0.75cm]
  \item[\textbf{RQ1}:] Can the analysis be \textit{applied to real-world GraphQL APIs}, especially considering the required configuration?
  \item[\textbf{RQ2}:] Does the analysis produce \textit{cost upper-bounds} for queries to such APIs?
  \item[\textbf{RQ3}:] Are these bounds \textit{useful}, i.e. close enough to the actual costs of the queries for providers to respond based on the estimates?
  \item[\textbf{RQ4}:] Is our analysis \textit{cheap enough} for use in API management?
  \item[\textbf{RQ5}:] How does our approach \textit{compare} to other solutions? 
\end{enumerate}

{
\setlength{\belowcaptionskip}{-0em} 
\begin{table}
  \centering
  \caption{Characteristics of the evaluated APIs.}
  \vspace{-0.5em}
  \begin{tabular}{@{}p{0.25\textwidth} r @{\hspace{3em}} r@{}} 
    \textsc{Schema}                   & \textsc{GitHub} & \textsc{Yelp} \\
    \toprule
    Number of object types            & 245             & 25 \\
    Total fields on all object types  & 1,569           & 121 \\
    Lines of Code (LoC)               & 22,071          & 760 \\
    Pagination w. slicing arguments   & Yes             & Yes \\
    Pagination w. connections pattern & Yes             & Yes \\
\\
    \textsc{Configuration}            & \textsc{GitHub} & \textsc{Yelp} \\
    \toprule
    Number of default limits          & 21 & 13  \\
    LoC (\% of schema LoC )           & 50 (0.2\%) & 39 (5.1\%)
  \end{tabular}
  \label{table:evaluation}
  \end{table}
}


Our analysis depends on a novel dataset (\cref{sec:evaluation_query-generation}).
We created the first GraphQL query-response corpus for two reputable, publicly accessible GraphQL APIs: GitHub~\cite{GitHubGraphQL:2019} and Yelp~\cite{YelpGraphQL:2017}.
\cref{table:evaluation} summarizes these APIs using the metrics from~\cite{Wittern:2019}.

To answer RQ1, we discuss configuring our analysis for these APIs (\cref{sec:evaluation_configuration}).
To answer RQ2-RQ4, we analyzed the predicted and actual complexities in our query-response corpus (\cref{sec:evaluation_measurements}).
For RQ5, we compare our findings
  experimentally to open-source static analyses,
  and
  qualitatively to closed-source commercial approaches (\cref{sec:evaluation_comparison}).

\subsection{A GraphQL Query-Response Corpus}
\label{sec:evaluation_query-generation}
Answering our research questions requires a multi-API query-response corpus.
None existed so we created one.
We automatically generated queries for GitHub and Yelp,
  and collected $5,000$ unique query-response pairs each from April 2019 to March 2020. 

\MyParagraph{General approach.}
We developed and open-sourced a GraphQL query generator~\cite{artifact}.
Its input is a GraphQL schema and user-supplied parameters.
Following standard grammar-based input generation techniques~\cite{godefroid2008grammar}, it generates queries recursively and randomly by building an abstract syntax tree and rendering it as a GraphQL query string.
The \textit{depth} probability dictates the likelihood of adding nested fields.
The \textit{breadth} probability controls the likelihood of adding additional adjacent fields into each query level.

\MyParagraph{Providing Argument Values.}
The AST generated by this approach includes \textit{argument variables} that must be concretized.
We populate these values using a \emph{provider map} that maps a GraphQL argument to functions that generate valid values.


\MyParagraph{Queries for GitHub and Yelp.}
We tuned the query generation to our experimental design and context.
We used breadth and depth probabilities of $0.5$.
For \textit{numeric arguments}, we used normally distributed integer values with mean and variance of $\langle 5, 1.0 \rangle$ (GitHub) and $\langle 7, 2.0 \rangle$ (Yelp).\footnote{Yelp has a maximum nesting depth. Larger values yielded complex, shallower queries.}
For \textit{enum arguments}, we randomly chose a valid enum value.
For \textit{named arguments} (\eg GitHub projects, Yelp restaurants), we randomly chose a valid entity, \eg the $100$ most-starred GitHub repositories. 
For \textit{context-dependent arguments}, we used commonly-valid values, \eg \code{README.md} as a GitHub filepath.


Ethically, because we are issuing queries against public APIs, we omitted overly-complex queries and queries with side effects.
We issued queries with:
  (1) estimated type or resolve complexity $\leq 1,000$;
  (2) nesting depth $\leq 10$;
  and
  (3) \gql{query}, not \gql{mutation} or \gql{subscription} operations. 

\newcommand{\MedianQueryTypeComplexity}{81.5\xspace}

\newcommand{\MedianResolveTypeComplexity}{29\xspace}

\MyParagraph{Query realism.}
Our query generation algorithm yielded diverse queries for each API.
To assess their realism, we compared them to example queries provided by each API. 
Yelp did not provide any examples, but GitHub's documentation includes 24 example queries~\cite{GitHubGraphQL:Examples:2019}. 
We analyzed them using the \textit{$W_{1,0}$} configuration (cf. \Cref{sec:static_analysis}) to determine the size of a typical query.
Only $14$ could be analyzed using the same GitHub schema version. 
Of these, all but one had a type complexity of $302$ or less.
Therefore, we used a type complexity of $\leq 300$ to categorize \textit{typical queries}.



\subsection{RQ1: Configuration}
\label{sec:evaluation_configuration}
Our aim is a backend-agnostic query cost analysis.
As prior work~\cite{Hartig:2018,Andersson2018GQLResultSizeCalculation} cannot be applied to estimate query cost for arbitrary GraphQL servers, we first assess the feasibility of our approach.

Per~\cref{sec:analysis_configuration}, our analysis requires GraphQL API providers to configure limit arguments, weights, and default limits.
We found it straightforward to configure both APIs.
Our configurations are far smaller than the corresponding schemas (\cref{table:evaluation}).

\MyParagraph{Limit arguments.}
GitHub and Yelp uses consistent limit arguments names (\gql{first} and \gql{last} for GitHub, \gql{limit} for Yelp), which we configured with regular expressions.
GitHub has a few exceptional limit arguments, such as \gql{limit} on the \gql{Gist.files} field, which we configured with both strings and regular expressions.
Following the conventions of the connections pattern~\cite{Relay-Pagination}, we set all \gql{Connection} types to have the limited fields \gql{edges} and \gql{nodes} for GitHub.
GitHub also has a few fields that follow the slicing pattern, so we did not set any limited fields for these.
The Yelp API strictly follows the slicing pattern, so no additional settings were required. 

\MyParagraph{Weights.}
For simplicity, we used a \textit{$W_{1,0}$} configuration (cf. \cref{sec:static_analysis}).
This decision permitted us to compare against the open-source static analysis libraries, but may not reflect the actual costs for GraphQL clients or service providers.
For example, we set the type weights for scalars and enums to~$0$, supposing that these fields are paid for by the resolvers for the containing objects.

\MyParagraph{Default limits.}
We identified default limits for the 21 fields (GitHub) and 13 fields (Yelp).
These fields are unpaginated lists or lists that do not require limit arguments.
We determined these numbers using API documentation and experimental queries.

\subsection{RQ2-RQ4: Complexity Measurements}
\label{sec:evaluation_measurements}

Using this configuration, we calculated type and resolve complexities for each query-response pair in the corpus.
\cref{fig:RQ1PredQuad} summarizes the results for Yelp and GitHub using heat maps.
Each heat map shows the density of predictions for request/response complexity cells.
Cells above the diagonal are queries whose actual complexity we over-estimated, cells below the diagonal are under-estimates.


\subsubsection{RQ2: Upper Bounds}
\label{sec:evaluation_measurements_complexities}

In~\cref{fig:RQ1PredQuad}, all points lie on or above the diagonal.
In terms of our analysis, every query's predicted complexity is an upper bound on its actual complexity.
This observation is experimental evidence for the results theorized in~\cref{sec:static_analysis}.

As an additional note, the striations that can be seen in~
\cref{fig:OUR_complexity_heatmap_resolve_yelp} are the result of fields with large default limits. Queries that utilize these fields will have similar estimated complexities.

\begin{figure}
  \centering
  \subfloat[Yelp type complexities]{
    \label{fig:OUR_complexity_heatmap_type_yelp}
    \includegraphics[width=0.5\columnwidth]{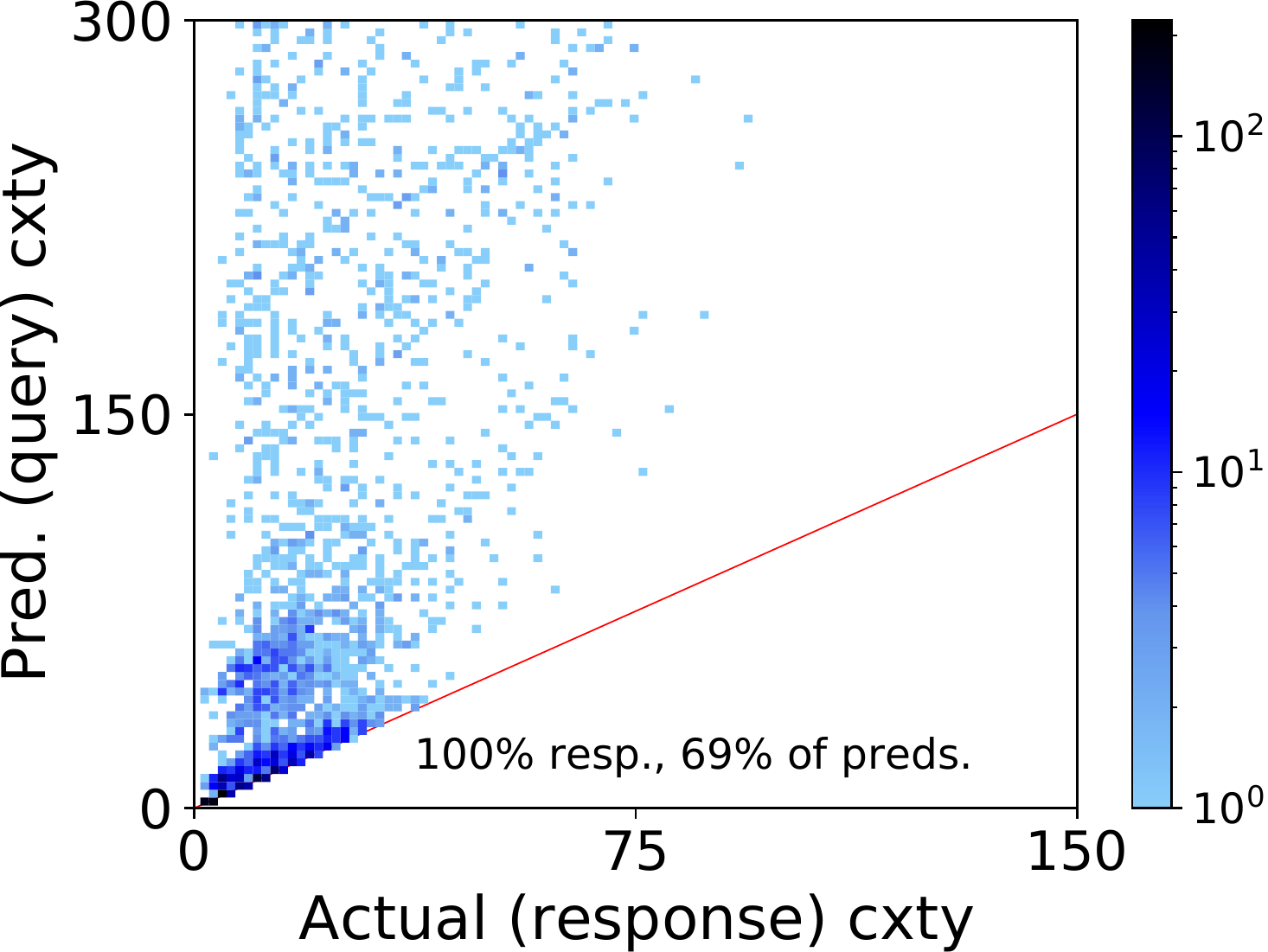}
  }
  \subfloat[Yelp resolve complexities]{
    \label{fig:OUR_complexity_heatmap_resolve_yelp}
    \includegraphics[width=0.5\columnwidth]{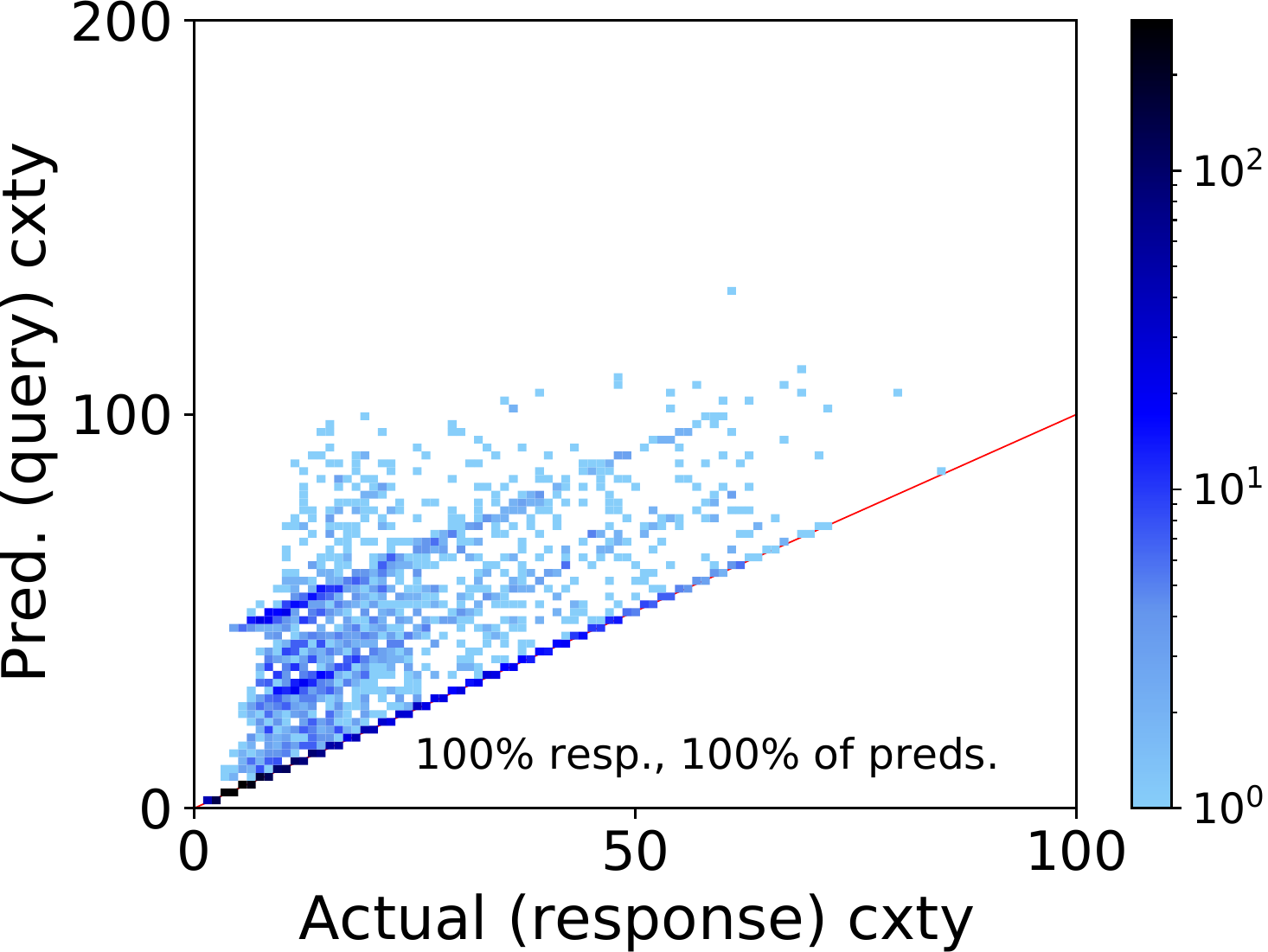}
  }
  \\
  \subfloat[GitHub type complexities]{
    \label{fig:OUR_complexity_heatmap_type_github}
    \includegraphics[width=0.5\columnwidth]{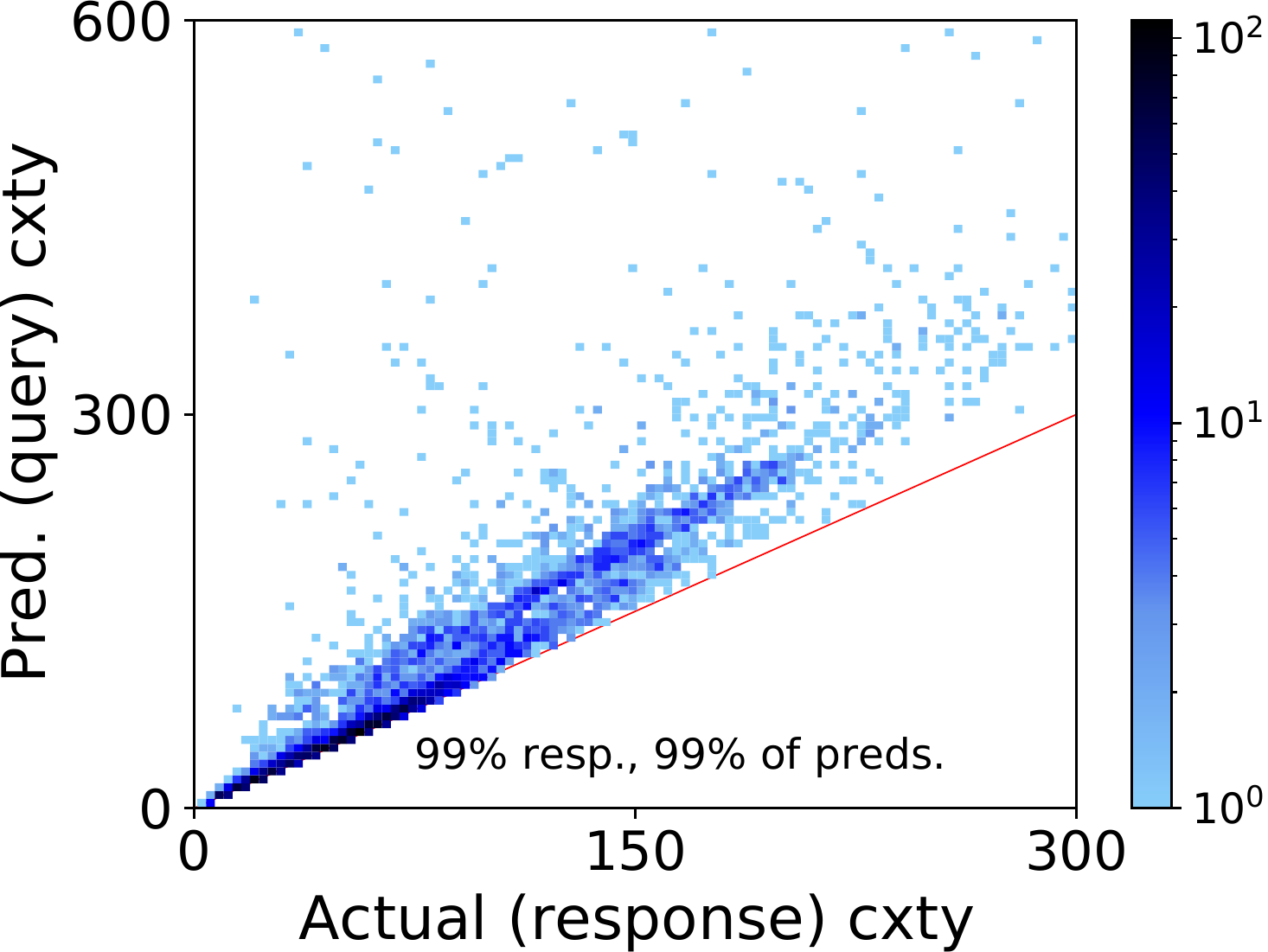}
  }
  \subfloat[GitHub resolve complexities]{
    \label{fig:OUR_complexity_heatmap_resolve_github}
    \includegraphics[width=0.5\columnwidth]{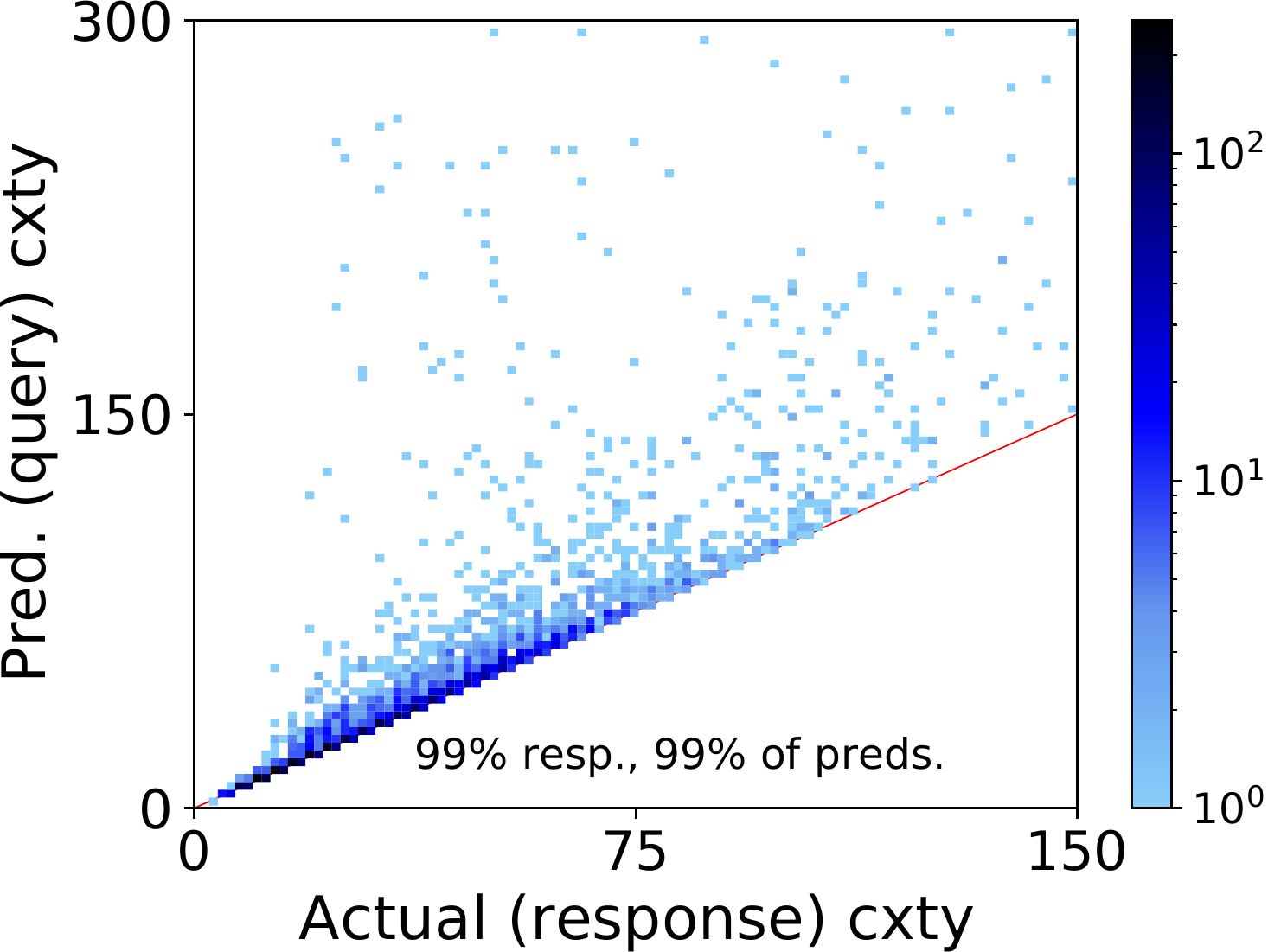}
  }
  \caption{
    Actual (response) complexities and predicted (query) complexities,
    using our analysis on the corpus.
    Each figure has text indicating the percentage of responses that are shown (the remainder exceed the x-axis),
      and the percentage of the corresponding predictions that are shown (the remainder exceed the y-axis).
  }
  \label{fig:RQ1PredQuad}
\end{figure}

\subsubsection{RQ3: Tight Upper Bounds}
\label{sec:evaluation_measurements_tight-bound}

{
\setlength{\belowcaptionskip}{-0em} 
\begin{table}
  \centering
  \caption{Over-estimation of our type and resolve complexities.
   The first table summarizes our approach on all queries.
   The second shows the results for ``typical'' queries, \ie those with type complexities of up to 300 (cf. \cref{sec:evaluation_query-generation}).
  }
  \label{table:overestimation}
  \vspace{-0.5em}
  \begin{tabular}{l c c c c} 
     & \multicolumn{2}{c}{\textsc{Yelp ($5,000$)}} & \multicolumn{2}{c}{\textsc{GitHub ($5,000$)}} \\
    \textsc{All queries}   & Resolve & Type & Resolve & Type \\
    \toprule
    Underestimation         & \textit{None}  & \textit{None}  & \textit{None}  & \textit{None} \\ 
    No overestimation       & 60.1\% & 21.8\% & 54.7\% & 8.4\% \\ 
    Overestimation <25\%    & 63.0\% & 26.1\% & 84.9\% & 60.3\% \\
    Overestimation <50\%    & 66.7\% & 30.1\% & 92.2\% & 84.0\% \\
    \\
     & \multicolumn{2}{c}{\textsc{Yelp ($3,416$)}} & \multicolumn{2}{c}{\textsc{GitHub ($4,703$)}} \\
    \textsc{``Typical'' queries}  & Resolve & Type & Resolve & Type \\
    \toprule
    Underestimation         & \textit{None}  & \textit{None}  & \textit{None}  & \textit{None} \\ 
    No overestimation       & 83.4\% & 31.9\% & 57.9\% & 9.0\% \\ 
    Overestimation <25\%    & 85.3\% & 38.2\% & 88.3\% & 63.8\% \\
    Overestimation <50\%    & 88.1\% & 44.1\% & 95.2\% & 87.3\% \\
  \end{tabular}

\end{table}
}

Answering RQ2, we found our analysis computes upper bounds on the type and resolve complexities of queries.
Other researchers have suggested that these cost bounds may be too far from the actual costs to be actionable~\cite{Hartig:2018}.

Our data show that the bounds are tight enough for practical purposes, and are as tight as possible with a static, data-agnostic approach.
\cref{fig:RQ1PredQuad} indicates that our upper bound is close or equal to the actual type and resolve complexities of many queries --- this can be seen in the high density of queries near the diagonals.

Our bounds are looser for more complex queries.
This follows intuition about the underlying graph:
  larger, more nested queries may not be satisfiable by an API's real data.
Data sparsity leads responses to be less complex than their worst-case potential.
\cref{table:overestimation} quantifies this observation.
It shows the share of queries for which our predictions over-estimate the response complexity by <25\% and <50\%.
Over-estimation is less common for the subset of ``typical'' queries whose estimated type complexity is $\leq 300$.

However, per the proofs in~\cref{sec:analysis-complexity}, our upper bounds are as tight as possible without dynamic response size information.
The over-estimates for larger queries are due to data sparsity, not inaccuracy.
For example, consider this pathological query to GitHub's API:

\begin{lstgql}
  query {
    organization (login: "nodejs") {
      repository (name: "node") { issues (first: 100) { nodes {
        repository { issues (first: 100) { nodes {
          ... }}}}}}}}
\end{lstgql}

\noindent This query cyclically requests the same \gql{repository} and \gql{issues}.
%
With two levels of nesting, the query complexities are $10,203$ (resolve) and $20,202$ (type).
If the API data includes at least 100 \gql{issues}, the response complexities will match the query complexities.

\subsubsection{RQ4: Performance}
\label{sec:evaluation_measurements_performance}
Beyond the functional correctness of our analysis, we assessed its runtime cost to see if it can be incorporated into existing request-response flows, \eg in a GraphQL client or an API gateway.
We measured runtime cost on a 2017 MacBook Pro (8-core Intel i7 processor, 16 GB of memory).

As predicted in~\cref{sec:analysis-complexity}, our analysis runs in linear time as a function of the query and response size.\footnote{We define \emph{query size} as the number of derivations required to generate the query from the grammar presented in \cref{sec:background_schema}. It can be understood as the number of lines in the query if fields, inline fragments, and closing brackets each claim their own line.}
The median processing time was $3.0$~ms for queries, and $1.1$~ms for responses.
Even the most complex inputs were fairly cheap; $95\%$ of the queries could be processed in $<7.3$~ms, and $95\%$ of the responses in $<4$ ms.
The open-source analyses we consider in~\cref{sec:evaluation_comparison} also appear to run in linear time.


\subsection{RQ5: Comparison to Other Static Analyses}
\label{sec:evaluation_comparison}

\newcommand{\BQOne}{BQ1\xspace}
\newcommand{\BQTwo}{BQ2\xspace}

In this section we compare our approach to state-of-the-art static GraphQL analyses (\cref{sec:bm-existingAnalyses}).
To permit a fair comparison across different notions of GraphQL query cost, we tried to answer two practical \emph{bound questions} for the GitHub API with each approach.

\begin{enumerate}[leftmargin=1cm]
  \item[\textbf{BQ1}:] How large might the response be?
  \item[\textbf{BQ2}:] How many resolver functions might be invoked?
\end{enumerate}

\noindent
\BQOne is of interest to clients and service providers, who both pay the cost of handling the response.
Various interpretations of ``large'' are possible, so we operationalized this as the number of distinct objects (non-scalars) in the response.
\BQTwo is of interest to service providers, who pay this cost when generating the response.

We configured and compared our analysis against three open-source analyses experimentally, with results shown in~\cref{fig:OSSComparison_GitHub}.
The static GraphQL analyses performed by corporations are not publicly available, so we discuss them qualitatively instead.

\newcommand{\LibCatalyzerGVC}{libA\xspace}
\newcommand{\LibSlicknodeGQC}{libB\xspace}
\newcommand{\LibPaBruGCA}{libC\xspace}

{
 \setlength{\belowcaptionskip}{-1em} 
\begin{figure*}[!ht]
  \centering
  \subfloat[Our GitHub complexities]{
    \label{fig:OSSComparison_GitHub_Us}
    \includegraphics[width=0.50\columnwidth]{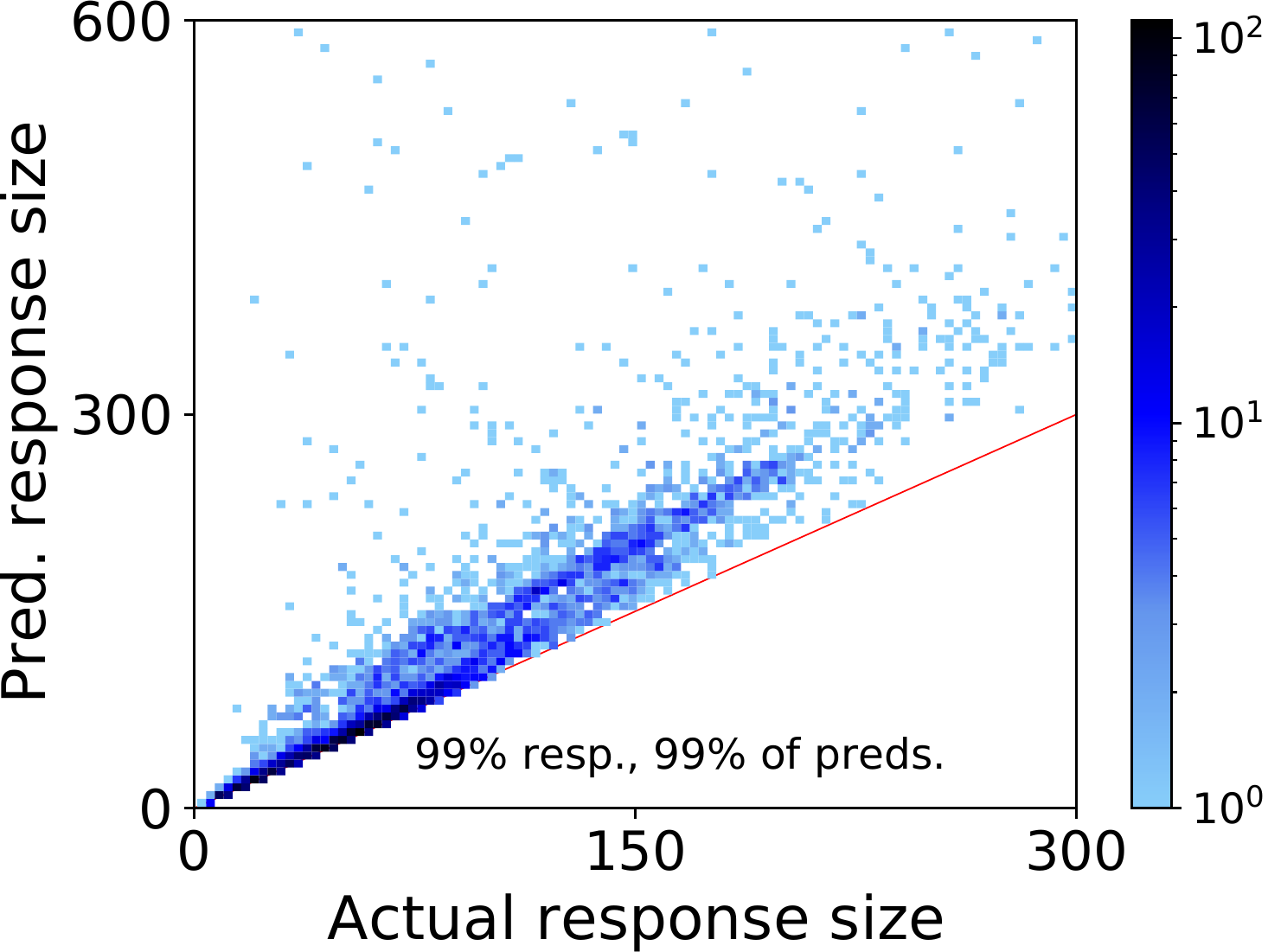}
  }
  \subfloat[libA's GitHub complexities]{
    \label{fig:OSSComparison_GitHub_libA}
    \includegraphics[width=0.50\columnwidth]{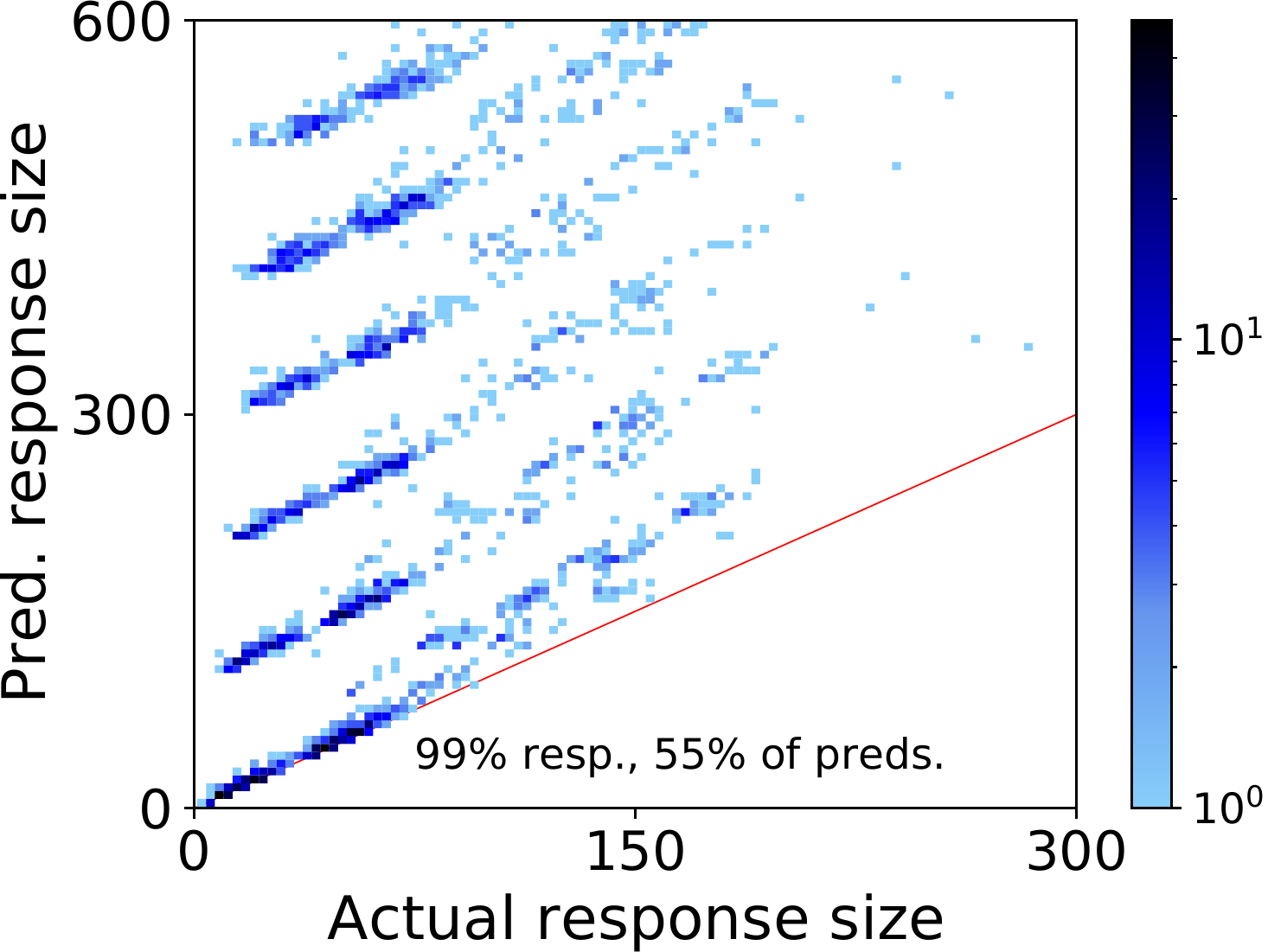}
  }
  \subfloat[libB's GitHub complexities]{
    \label{fig:OSSComparison_GitHub_libB}
    \includegraphics[width=0.50\columnwidth]{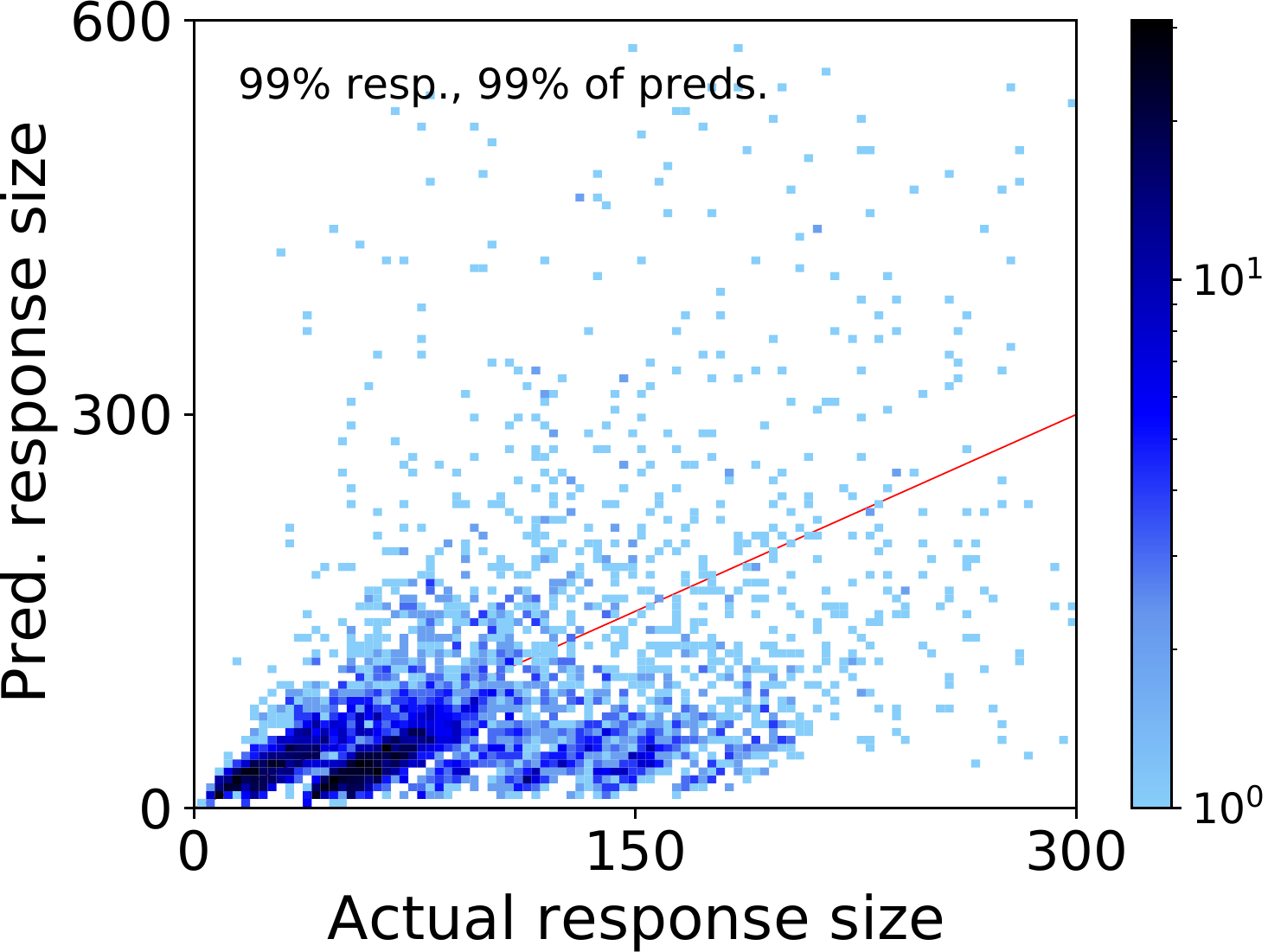}
  }
  \subfloat[libC's GitHub complexities]{
    \label{fig:OSSComparison_GitHub_libC}
    \includegraphics[width=0.50\columnwidth]{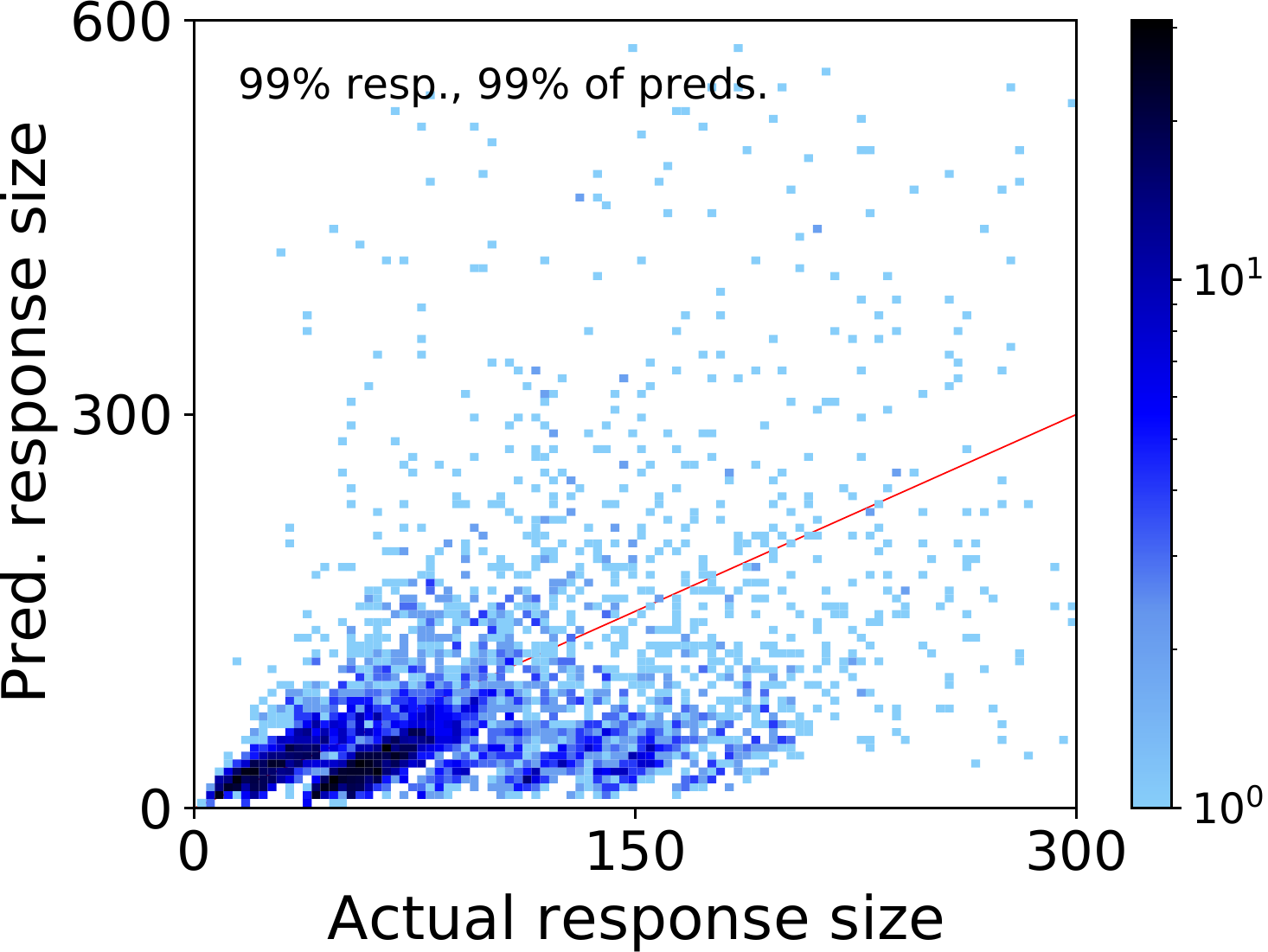}
  }
  \caption{
    \BQOne: Actual and predicted response sizes based on type complexity,
    from our analysis and the libraries on the GitHub data.
    \LibSlicknodeGQC and \LibPaBruGCA produce identical values under our configuration.
    All static approaches over-estimate due to data sparsity.
    The libraries have sources of over-estimation beyond our own.
    \LibSlicknodeGQC and \LibPaBruGCA also under-estimate (cells below the diagonal).
  }
  \label{fig:OSSComparison_GitHub}
\end{figure*}
}

\subsubsection{Configuring our Analysis to Answer \BQOne and \BQTwo} \label{sec:evaluation_comparison_Us}
Our measure of a query's type complexity can answer \BQOne.
Using the $W_{1,0}$ configuration (cf.~\cref{sec:static_analysis}),
the type complexity measures the maximum number of objects that can appear in a response, \ie response size.

Our measure of a query's resolve complexity is suitable for answering \BQTwo.
We assume that the cost of a resolver for a scalar or an enum field is paid for by some higher-level resolver.
Thus, we again configured our analysis for GitHub using the $W_{1,0}$ configuration from~\cref{sec:static_analysis}.
The resolve complexity resulting from this configuration will count each possible resolver function execution once.

\subsubsection{Comparison to Open-Source Static Analyses} \label{sec:evaluation_comparison_OpenSource}
We selected open-source libraries for comparison.
We describe their approaches in terms of our complexity measures,
configure them to answer the questions as best as possible,
and discuss their shortcomings.

\MyParagraph{Library selection.}
We considered analyses that met three criteria:
  (1) \textit{Existence}: They were hosted on GitHub, discoverable via searches on  
      {\code{graphql \{static|cost|query\} \{cost|analysis|complexity\}}}; \\
  (2) \textit{Relevance}: They statically compute a query cost measure;
  and
  (3) \textit{Quality}: They had $\geq 10$ stars, a rough proxy for quality~\cite{borges2018star}.

Three libraries met this criteria:
  \LibCatalyzerGVC~\cite{graphql-validation-complexity},
  \LibSlicknodeGQC~\cite{graphql-query-complexity},
  and
  \LibPaBruGCA~\cite{graphql-cost-analysis}.

\MyParagraph{Understanding their complexity measures.}
At a high level, each of these libraries defines a cost measure in terms of the data returned by each resolver function invoked to generate a response.
Users must specify the \textit{cost} of the entity type that it returns,
  and a \textit{multiplier} corresponding to the number of entities.
These libraries then compute a weighted recursive sum, much as we do.
Problematically, these libraries do not always permit users to specify the sizes of lists, leading to over- and under-estimates.
In terms of the analysis from~\cref{sec:static_analysis}, the list field size $l$ cannot always be obtained (see \code{qtcx}, last rule above Theorem 2).
We discuss the details below.


\MyParagraph{Configuring the libraries to answer \BQOne.}
We summarize our approach here.
Our artifact includes the configuration details~\cite{artifact}.

\renewcommand{\MyParagraph}[1]{\textit{#1}\xspace}
\renewcommand{\MyParagraph}[1]{\vspace{0.25em} \textit{#1}\xspace}
\renewcommand{\MyParagraph}[1]{\smallskip \textit{#1}\xspace}

\textbf{\LibCatalyzerGVC}:
We set the cost of all objects to $1$ and all other types to $0$.
The library allows the maximum size of a list to be set.
Therefore, we set the maximum sizes of unpaginated lists using values identified in~\cref{sec:evaluation_configuration}.
Unfortunately, the library cannot use arguments and consequently, cannot take advantage of pagination.
To configure paginated lists, we instead set their maximum sizes to GitHub's maximum list size, $100$~\cite{GitHubGraphQL:ResourceLimitations:2019}.


\textbf{\LibSlicknodeGQC}:
As with \LibCatalyzerGVC, we set the cost of objects to $1$ and other types to $0$.
We also leveraged \LibSlicknodeGQC's partial pagination support.
We configured it to use the limit arguments of paginated lists. 
However, \LibSlicknodeGQC treats unpaginated lists are treated as a single (non-list) entity.
It also does not support the connections pattern.


\textbf{\LibPaBruGCA}:
\LibSlicknodeGQC and \LibPaBruGCA appear to be related.
Their support for \BQOne is equivalent, and we configured them similarly.



\MyParagraph{Comparing outcomes on \BQOne.}
\cref{fig:OSSComparison_GitHub} illustrates the effectiveness of each approach when answering \BQOne.
The same query-response corpus was used in each case, and the response sizes were calculated using the method discussed in~\cref{sec:query_analysis}.
The variation is on the y-axis, the predicted response size estimated from the query.
As implied by our proofs of correctness, our approach consistently provides an upper bound on the actual response size (no under-estimation).

As mentioned, \LibCatalyzerGVC cannot use arguments as limits.
Configuring all paginated lists to have a maximum list size equal to GitHub's maximum list size resulted in significant over-estimation.
Additionally, this configuration caused in the striations in~\cref{fig:OSSComparison_GitHub_libA}, which lie at intervals of 100 (the maximum list size); queries fall into bands based on the number of paginated lists they contain.
In contrast to the striations found in~\cref{fig:OUR_complexity_heatmap_resolve_yelp}, these fields nest within each other, allowing for repeated and regularly spaced stripes.
To illustrate the overestimation, about 81.2\% of \LibCatalyzerGVC's predictions were more than double the actual response size.
In contrast, 84\% of our analysis's predictions over-estimate less than 50\% (cf. \cref{table:overestimation}).
Furthemore, the median over-estimation of our analysis (i.e. median error) is 19\%, whereas that of \LibCatalyzerGVC is 634\%. The 90\% percentile over-estimation of our analysis is 66.0\% and that of \LibCatalyzerGVC is 14,568.2\%.
As a result, our analysis conclusively performs better than \LibCatalyzerGVC.

\LibSlicknodeGQC and \LibPaBruGCA both over-estimate \emph{and} under-estimate.
Because they do not support default limits and treat unpaginated lists as a single (non-list) entity, they are prone to under-estimation.
The under-estimation can compound geometrically when multiple unpaginated lists are nested.
About 64\% of their predictions were under-estimations.
Additionally, because they do not support the connections pattern style of pagination and do not properly utilize limit arguments in these cases, they are prone to over-estimation.
In any case, because \LibSlicknodeGQC and \LibPaBruGCA can under-estimate, they do not reliably produce upper bounds, which is a problem in security critical contexts.
In contrast, our analysis consistently produces upper bounds and notably, tight upper bounds.
Because our analysis is not at risk of this security problem, our analysis also performs better than \LibSlicknodeGQC and \LibPaBruGCA.



\MyParagraph{Configuring the libraries to answer \BQTwo.}
We were \textit{unable} to configure these libraries to answer \BQTwo.
Fundamentally, \LibCatalyzerGVC, \LibSlicknodeGQC, and \LibPaBruGCA measure costs in terms of the entities returned by the query, not in terms of the resolvers used to obtain the results.
Trying to apply the multipliers to count resolvers will instead inflate them.
The trouble is illuminated by our resolve complexity analysis (\cref{fig:complexity}):
  in the first clause of the final rule, the resolver multiplier should be used to account for the $l$ uses of each child resolver, but the parent should be counted just once.
In contrast, when answering \BQOne, the type multiplier should be applied to both a field and its children.

This finding highlights the novelty of our notion of resolve complexity. 
We believe its ability to answer \BQTwo also shows its utility.

\subsubsection{Comparison to Closed-Source Analyses} \label{sec:evaluation_comparison_ClosedSource}
GitHub and Yelp describe their analyses in enough detail for comparison.
GitHub's analysis can approximate \BQOne and \BQTwo, while Yelp's cannot.

\MyParagraph{GitHub.}
GitHub's GraphQL API relies on two static analyses for rate limiting and blocking overly complex queries prior to executing them~\cite{GitHubGraphQL:ResourceLimitations:2019}.
Both analyses support pagination via the connections pattern as described in~\cref{sec:pagination}.
For \BQOne, their \emph{node limit} analysis disregards types associated with the connections pattern.\footnote{A possible explanation: virtual constructs like \gql{Edge} and \gql{Connection} types may not significantly increase the amount of effort to fulfill a query.}
We can replicate this behavior with our analysis by setting the weights of types associated with the connections pattern to~$0$ and~$1$ otherwise.
For \BQTwo, their \emph{call's score} analysis only counts resolvers that return \gql{Connection} types. 
We can also replicate this behavior by setting a weight of~$1$ to these resolvers and~$0$ otherwise.
In any case, because GitHub's metrics cannot be weighted, they cannot distinguish between more or less costly types or resolvers.

GitHub's focus on the connections pattern may have caused them issues in the past. 
When our study began, GitHub shared a shortcoming with \LibSlicknodeGQC and \LibPaBruGCA: it did not properly handle unpaginated lists (which would not employ the connections pattern).
We demonstrated the failure of their approach in~\cref{fig:architecture}.
We reported this possible denial of service vector to GitHub.
They confirmed the issue and have since patched their analysis.

\MyParagraph{Yelp.}
\label{sec:evaluation_discussion_yelp}
Yelp's GraphQL API analysis has both static and dynamic components.
Statically, Yelp's GraphQL API rejects queries with more than four levels of nesting.
This strategy bounds the complexity of valid queries, but expensive queries can still be constructed with this restriction using large nested lists. 
Dynamically, Yelp then applies rate limits by executing queries and retroactively replenishing the client's remaining rate limits according to the complexity of the response.
It is therefore possible to significantly exceed Yelp's rate limits by submitting a complex query when a client has a small quota remaining.
Using our type complexity analysis, Yelp could address this problem by rejecting queries whose estimated complexities exceeded a client's remaining rates.

\renewcommand{\MyParagraph}[1]{\vspace{0.18em}\textit{#1}\xspace}
\section{Discussion and Related Work}

%

\MyParagraph{Configuration and applicability.}
Our experiments show that our analysis is configurable to work with two real-world GraphQL APIs.
Applying our analysis was possible because it is static, \ie it does \emph{not} depend on any interaction with the GraphQL APIs or other backend systems.
This contrasts with dynamic analyses, which depend on probing backends for list sizes~\cite{Hartig:2018}.
Our analysis is more broadly applicable, and can be deployed separately from the GraphQL backend if desired, \eg in API gateways (cf.~\cref{sec:application}).
The static approach carries greater risk of over-estimation, however, and API providers may consider a hybrid approach similar to GitHub's: a static filter, then dynamic monitoring.


We have identified three strategies for managing over-estimation.
First, an unpaginated list field may produce responses with a wide range of sizes, leading our approach to overestimate. 
Schema designers may respond by paginating the list, which will bound the degree of overestimation. 
Second, in our tests we used the $W_{1,0}$ configuration, which assigned all types and resolvers the same weights.
In contexts where different data and resolvers carry different costs, schema designers can tune the configuration appropriately.
Lastly, service providers may resort to a hybrid static/dynamic system to leverage the graph data at runtime.
The design of such a system is a topic for further research.


\MyParagraph{The value of formalization.}
Our formal analysis gives us provably correct bounds, provided that list sizes can be obtained from an analysis configuration.
This contrasts with the more ad hoc approaches favored in the current generation of GraphQL analyses used by practitioners.
A formal approach ensured that we did not miss ``corner cases'', as in the unpaginated list entities missed by \LibSlicknodeGQC, \LibPaBruGCA, and GitHub's internal analysis.
Although our formalisms are not particularly complex, they guarantee the soundness and correctness missing from the state of the art.


\MyParagraph{Data-driven software engineering.}
Our approach benefited from an understanding of GraphQL as it is used in practice, specifically the use of pagination and naming conventions.
Although pagination is not part of the GraphQL specification~\cite{GraphQLInvention:2015},
  we found that the GraphQL grey literature emphasized the importance of pagination.
A recent empirical study of GraphQL schemas confirmed that various pagination strategies are widely used by practitioners~\cite{Wittern:2019}.
We therefore
  incorporated pagination into our formalization (viz. that list sizes can be derived from the context object)
  and
  supported both of the widely used pagination patterns in our configuration.
This decision differentiates our analysis from the state of the art, enabling us to avoid common sources of cost under- and over-estimation.
In addition, the prevalence of naming conventions in GraphQL schemas inspired our support for regular expressions, which allowed our configuration answer \BQOne and \BQTwo remarkably concisely.
In contrast, the libraries we used required us to manually specify costs and multipliers for each of the (hundreds of) GitHub schema elements --- they did not scale well to real-world schemas.

\MyParagraph{Bug finding.}
One surprising application of our analysis was as a bug finding tool.
When we configured Yelp's API, we assumed that limit arguments would be honored (\cref{hyp:resolve-limit}, \cref{sec:background_complexity}).
In early experiments we found that Yelp's resolver functions for the \gql{Query.reviews} and \gql{Business.reviews} fields ignore the limit argument.
Yelp's engineering team confirmed this to be a bug.
This interaction emphasized the validity of our assumptions.

\MyParagraph{Database Query Analyses.}
There has been significant work on query cost estimation for database query languages to optimize execution.
Our analysis is related to the estimation of a database query's cardinality and cost~\cite{OracleDB}. 
However, typical SQL servers routinely optimize queries by reordering table accesses, which makes static cost evaluation challenging~\cite{SQLEstimation}. 
In comparison, our analysis takes advantage of the limited expressivity of GraphQL, and the information provided in the schema (e.g., via pagination mechanism) to guarantee robust and precise upper-bounds before execution. 

\balance
\section{Application example: API Gateway}
\label{sec:application}
We designed our GraphQL query cost analysis as a building block for a GraphQL \emph{API gateway} that offers \emph{API management} for GraphQL backends (\cref{fig:architecture}). 
We worked with IBM's product division to implement a GraphQL API gateway based on our ideas.
Following patterns for API gateways for REST-like APIs~\cite{3Scale,APIConnect,APIGee}, this gateway is backend agnostic, made possible by our data- and backend-independent query cost analysis. 
This gateway was incorporated into v10.0.0 of IBM's API Connect and DataPower products~\cite{APIConnect2020}.

A weakness of our approach is the need for configuration.
During productization we explored two ways to support this task: a graphical user interface (GUI) and automatic recommendations.
The gateway automatically ingests the backend's schema using introspection~\cite{GraphQLDocs-Introspection}.
Users can then configure using the GUI depicted in~\cref{fig:screenshot}. 
They can manually configure fields with weights, limit arguments, and/or default limits, either one at a time or bulk-apply to all types/fields matching a search.
To mitigate the security risks of schemas with nested structures, the GUI automatically identifies some problematic fields and proposes an appropriate configuration based on schema conventions.
For example, it flags fields that return lists and infers possible configurations based on type information. 

{
\setlength{\belowcaptionskip}{0em} 
\begin{figure}
  \centering
  \includegraphics[width=0.95\columnwidth]{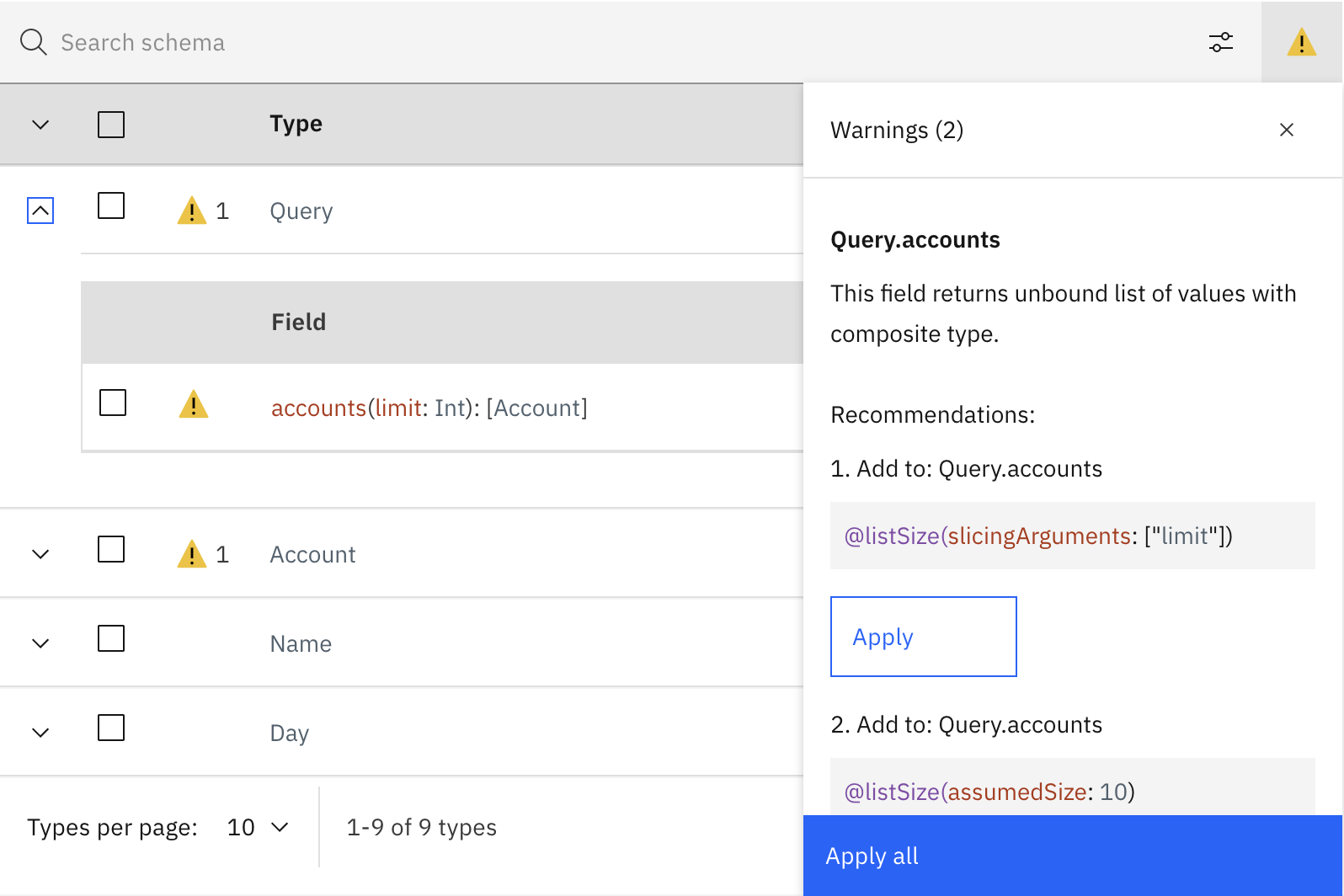}
  \caption{
    Screenshot of the configuration GUI in IBM's DataPower API gateway.
    The warnings indicate incomplete configuration and make recommendations.
  }
  \label{fig:screenshot}
\end{figure}
}



\section{Threats to Validity}
\label{sec:threats_to_validity}

\MyParagraph{Construct validity.}
Our study does not face significant threats to construct validity.
We believe our definitions of type complexity and resolve complexity are useful.
We do not rely on proxy measures, but rather measure these complexities directly from real queries.

\MyParagraph{Internal validity.}
These threats come from configuration and query realism.
In our evaluation, we created configurations for the GitHub and Yelp APIs.
Errors would affect the accuracy of our bounds.
Our evaluation showed that we did not make errors leading to under-estimation, but we may have done so for over-estimation.

Although RQ2 showed that our analysis produces upper bounds, in RQ3 our conclusions about the practicality of our bounds rely on the realism of our query-response corpus.
Our evaluation is based on randomly generated queries, parameterized as described in~\cref{sec:evaluation_query-generation}.
Some of our queries used popular projects, which are more likely to have associated data (decreasing over-estimates from data sparsity).
Other queries lacked contextual knowledge and may result in ``unnatural'' queries unlikely to be filled with data.
To avoid harming the public API providers, we bounded the complexity of the queries we issued, and this may have skewed our queries to be smaller than realistic queries.
We plan to pursue a more realistic set of queries, \eg obtained through collaboration with a GraphQL API provider or by mining queries from open-source software.

\MyParagraph{External validity.}
Our work makes assumptions about the properties of GraphQL schemas and backend implementations that may not hold for all GraphQL API providers.
For example, the complexity calculations depend on the presence of slicing arguments in queries, on resolver function implementations to enforce these limits, and on a proper configuration.
By relying on default limits~(\cref{sec:analysis_configuration}), we enable our analysis to function even if slicing arguments are not enforced in (parts of) a schema. 
We demonstrated that proper configuration is possible even when treating the backend as a grey box, as we did when evaluating on the GitHub and Yelp APIs (\cref{sec:evaluation_configuration}).





\section{Conclusion}
\label{sec:conclusion}
GraphQL is an emerging web API model.
Its flexibility can benefit clients, servers, and network operators.
But its flexibility is also a threat: GraphQL queries can be exponentially complex, with implications for service providers including rate limiting and denial of service.
The fundamental requirement for service providers is a cheap, accurate way to estimate the cost of a query.

We showed in our evaluation that existing \textit{ad hoc} approaches are liable to both over-estimates and under-estimates.
We proposed instead a principled approach to address this challenge.
Grounded in a formalization of GraphQL semantics, in this work we presented the first provably-correct static query cost analyses. 
With proper configuration, our analysis offers tight upper bounds, low runtime overhead, and independence from backend implementation details. 
We accompany our work with the first GraphQL query-response corpus to support future research.


\section*{Reproducibility}

An artifact containing
  the GraphQL query generator,
  the query-response corpuses,
  library configurations,
  and corpus measurements
  can be found here: \href{https://zenodo.org/record/4023299}{https://zenodo.org/record/4023299}.
Institutional policy precludes sharing our analysis prototype. 


\section*{Acknowledgments}

The authors are grateful to the reviewers and to A. Kazerouni for their helpful feedback.
We thank the IBM API Connect and DataPower teams for working with us on the GraphQL API Gateway.
\newpage

\balance
\bibliographystyle{ACM-Reference-Format}
\bibliography{bibliography}

\clearpage
\ifextended
\appendix{}

\section{Proofs}
\label{sec:proof}

\saresolve*

\begin{proof}
  We prove the following property by induction on the size of the query:

  \begin{ih}
    For all query of size~$k \leq n$, if~$\textit{ctx}$ contains information on   the parent of~$o$, and $s = \jstypeof{o}$, we have:
    $$
      \grsize{rc}{q}{s}{ctx} \geq \jsrsize{rc}{\gsem{q}{o}{ctx}}
    $$
  \end{ih}

  By definition of the configuration weights, IH holds for queries containing a single field.

  Assuming IH holds up to $n \in \mathbb{N}$, consider a query~$q$ of size~$n+1$.
  Following the grammar of Figure~\ref{fig:kernel}, there are three possible cases.

  \begin{itemize}
  \item $q = \gqon{type}{q'}$.\\
  The use of fragments do not change the parent.
  If $(s = \jstypeof{o}) \neq \textit{type}$, the returned object is empty and the complexity is~$0$, thus IH holds for~$q$.
  Otherwise, $\gsem{q}{o}{ctx} = \gsem{q'}{o}{ctx}$, and $\jsqsize{q'} < \jsqsize{q}$. Using the induction hypothesis:
  $$
  \begin{small}
  \arraycolsep=1.4pt
  \begin{array}{rcl}
     \grsize{rc}{q}{s}{ctx} &=& \grsize{rc}{q'}{s}{ctx}\\
     &\geq& \jsrsize{rc}{\gsem{q'}{o}{ctx}}\\
     &=& \jsrsize{rc}{\gsem{q}{o}{ctx}}
  \end{array}
\end{small}
  $$

  \item $q = \gqconj{q_1}{q_2}$.\\
  For any two objects~$o_1, o_2$, we have
  $$
  \begin{small}
  \jsrsize{rc}{\jsmerge{o_1}{o_2}} \leq \jsrsize{rc}{o_1} + \jsrsize{rc}{o_2}.
  \end{small}
  $$

  \noindent
  Since, $\jsqsize{q_1} < \jsqsize{q}$ and $\jsqsize{q_2} < \jsqsize{q}$, using the induction hypothesis on both~$q_1$ and~$q_2$ we get:
  $$
  \begin{small}
  \arraycolsep=1.4pt
  \begin{array}{rcl}
     \grsize{rc}{q}{s}{ctx} &=& \grsize{rc}{q_1}{s}{ctx} + \grsize{rc}{q_2}{s}{ctx}\\
     &\geq& \jsrsize{rc}{\gsem{q_1}{o}{ctx}} + \jsrsize{rc}{\gsem{q_2}{o}{ctx}}\\
     &\geq& \jsrsize{rc}{\jsmerge{\gsem{q_1}{o}{ctx}}{\gsem{q_2}{o}{ctx}}}\\
     &=& \jsrsize{rc}{\gsem{q}{o}{ctx}}
  \end{array}
  \end{small}
  $$

  \item $q = \gqnest{\gqmatch{label}{field}{args}}{q'}$.\\
  If the type of $s.\textit{field} = \jsisarray{s'}$ is a list of element of type $s'$, we note:
  $$
  \begin{small}
  \arraycolsep=1.4pt
  \begin{array}{l}
    \jslist{o_1}{o_n} = \jsresolve{o}{field}{args}{ctx}\\
    \textrm{and }l =\jslimit{rc}{s}{field}{args}{ctx}
  \end{array}
  \end{small}
  $$

  \noindent
  From Property~\ref{prop:resolve-limit} we have $l \geq n$.
  Since $\jsqsize{q'} < \jsqsize{q}$, and by construction $\textit{ctx}'$ contains the parent information for~$q'$. For $i \in [1, n]$ we have $s' = \jstypeof{o_i}$ and from the induction hypothesis $\grsize{rc}{q'}{s'}{ctx'} \geq \gsem{q'}{o_i}{ctx'}$. Then we get:

  $$
  \begin{small}
  \arraycolsep=1.4pt
  \begin{array}{rcl}
  l \times \grsize{rc}{q'}{s'}{ctx'} &\geq& \sum_{i=1}^n \grsize{rc}{q'}{s'}{ctx'}\\
  &\geq& \sum_{i=1}^n \jsrsize{rc}{\gsem{q'}{o_i}{ctx'}}\\
  &=& \jsrsize{rc}{\jslist{\gsem{q'}{o_1}{ctx'}}{\gsem{q'}{o_n}{ctx'}}}
  \end{array}
  \end{small}
  $$

  \noindent
  The list~$\jslist{o_1}{o_n}$ is the result of the call to a resolver function $\jsresolve{o}{field}{args}{ctx}$ which adds the corresponding weight~$w = \jsaccess{\jsaccess{rc}{type}}{field}.\texttt{weight}$ to the complexity.
  Hence we have:

  $$
  \begin{small}
  \arraycolsep=1.4pt
  \begin{array}{rcl}
    \grsize{rc}{q}{s}{ctx} &=& w + l \times \grsize{rc}{q'}{s.\textit{field}}{ctx'}\\
    &\geq& w + \jsrsize{rc}{\jslist{\gsem{q'}{o_1}{ctx'}}{\gsem{q'}{o_n}{ctx'}}}\\
    &=& \jsrsize{rc}{\jsobject{\jsfield{label}{\jslist{\gsem{q'}{o_1}{ctx'}}{\gsem{q'}{o_n}{ctx'}}}}}\\
    &=& \jsrsize{rc}{\gsem{q}{o}{ctx}}
  \end{array}
  \end{small}
  $$

  If the type of $s.\textit{field}$ is not a list we note:
  $$o' = \jsresolve{o}{field}{args}{ctx}.$$
  \noindent
  By construction $\textit{ctx}'$ contains the parent information for~$q'$, $\jsqsize{q'} < \jsqsize{q}$ and we can directly apply the induction hypothesis:

  $$
  \begin{small}
  \arraycolsep=1.4pt
  \begin{array}{rcl}
    \grsize{rc}{q}{s}{ctx} &=& w + \grsize{rc}{q'}{s.\textit{field}}{ctx'}\\
    &\geq& w + \jsrsize{rc}{\gsem{q'}{o'}{ctx'}}\\
    &=& \jsrsize{rc}{\jsobject{\jsfield{label}{\gsem{q'}{o'}{ctx'}}}}\\
    &=& \jsrsize{rc}{\gsem{q}{o}{ctx}}
  \end{array}
  \end{small}
  $$

\end{itemize}

\noindent
In all cases, IH holds for the query~$q$ of size~$n+1$.

Finally, Theorem~\ref{thm:resolve-complexity} follows by applying IH with an empty initial context (information of the parent of the root query).
\end{proof}

\fi
\end{document}